\documentclass[12pt, a4paper]{amsart}
\usepackage[utf8]{inputenc}
\usepackage{graphicx}
\usepackage{dcolumn}
\usepackage[hidelinks]{hyperref}
\usepackage[english]{babel}
\usepackage[autostyle, english = american]{csquotes}
\MakeOuterQuote{"}
\usepackage{amsthm}
\usepackage{mathrsfs}
\usepackage{amssymb}
\usepackage{bm}
\usepackage{physics}

\usepackage{cochineal}
\usepackage[cochineal]{newtxmath}

\usepackage[backend=bibtex,style=numeric, sorting = nyt, maxcitenames=5,mincitenames=3,maxbibnames=10,giveninits=true, doi=false, isbn = false]{biblatex}

\addbibresource{bibliography.bib}

\newtheorem{theorem}{Theorem}[section]

\newtheorem{proposition}[theorem]{Proposition}
\newtheorem{lemma}[theorem]{Lemma}
\theoremstyle{definition}
\newtheorem{definition}[theorem]{Definition}
\newtheorem{remark}[theorem]{Remark}
\newtheorem{example}{Example}[section]

\DeclareMathOperator{\BRST}{\gamma_{BRST}}
\DeclareMathOperator{\bv}{\bigtriangleup}
\DeclareMathOperator{\supp}{\text{supp}}

\newcommand{\mBV}{s_k}
\newcommand{\Slav}{\mathscr S_k}
\newcommand{\mBRST}{\mathscr S}

\newcommand{\Ocal}{\mathcal{O}}
\newcommand{\Mcal}{\mathcal{M}}
\newcommand{\conf}{\mathscr E}
\newcommand{\RR}{\mathbb{R}}
\newcommand{\ph}{\varphi}
\newcommand{\g}{\mathscr g}
\newcommand{\C}{\mathscr C}
\newcommand{\An}{A}

\newcommand{\F}{\mathscr{F}}
\newcommand{\V}{\mathscr{V}}
\newcommand{\loc}{{\rm loc}}
\newcommand{\Floc}{\F_{\loc}}

\newcommand{\A}{{A}}

\begin{document}
\title{A Lorentzian renormalisation group equation for gauge theories
}
\author{Edoardo D'Angelo}
\address{Dipartimento di Matematica, Università di Genova, Italy}
\address{Istituto Nazionale di Fisica Nucleare -- Sezione di Genova, Italy}
\email{edoardo.dangelo@edu.unige.it}

\author{Kasia Rejzner}
\address{Department of Mathematics, University of York, UK}
\email{kasia.rejzner@york.ac.uk}

\begin{abstract}
In a recent paper, with Drago and Pinamonti we have introduced a Wetterich-type flow equation for scalar fields on Lorentzian manifolds, using the algebraic approach to perturbative QFT. The equation governs the flow of the effective average action, under changes of a mass parameter $k$. Here we introduce an analogous flow equation for gauge theories, with the aid of the Batalin-Vilkovisky (BV) formalism.  
We also show that the corresponding effective average action satisfies a Slavnov-Taylor identity in Zinn-Justin form. We interpret the equation as a cohomological constraint on the functional form of the effective average action, and we show that it is consistent with the flow.
\end{abstract}
\maketitle
\section{Introduction}
Gauge symmetry and renormalization are two pillars of modern physics. General Relativity and Yang-Mills theory are built upon the invariance of the laws of Nature under diffeomorphisms and under a non-Abelian Lie group, respectively. In quantum field theory (QFT) language, the gauge symmetry of the classical action gives raise to Becchi-Rouet-Stora-Tyutin (BRST) symmetry after gauge fixing is imposed to quantise a theory. The invariance of scattering amplitudes under BRST symmetry, captured by the local cohomology of the BRST operator, is the fundamental requirement to obtain gauge-invariant, physical observables from a gauge-fixed action. Later, this was generalised to the Batalin-Vilkovisky (BV) formalism, which more recently has also been made rigorous in mathematical QFT \cite{CoGw-1, CoGw-2, Costello2007,Fredenhagen2013,Hollands2008}.

On the other hand, renormalization explains how the macroscopic world emerges from our microscopic models of fundamental interactions.  

In the Wilsonian view of renormalization \cite{Wilson73}, high energy degrees of freedom are progressively integrated out in the path-integral, obtaining a coarse-grained, finite description of a system from its microscopic properties. In one of its modern formulations, the functional Renormalization Group (fRG) \cite{Berges2000, Dupuis2021, Gies2006, Rosten2010}, coarse-graining is achieved by introducing a momentum-dependent cut-off in the action, which acts as a mass-like term for modes with momentum below an arbitrary scale $k$, and vanishes for higher-energy modes. The main object of study in the fRG is the effective average action, defined as the Legendre transform of the generating functional of connected Green's functions. In the limits $k \to \infty$ and $k \to 0$ the effective average action interpolates between the classical action and the full quantum action.  The flow equation then governs the $k-$dependence of the effective average action. The Wetterich equation \cite{Wetterich1992}, which is a widely used flow equation, is both UV and IR finite and can be taken as the starting point for the quantisation of a theory, since the functional derivatives of the effective average action give the n-points 1PI correlation functions.

The fRG has found numerous applications, from statistical physics to high-energy physics and quantum gravity (see e.g. \cite{Berges2000, Dupuis2021, Niedermaier2006, Percacci2017, ReuterSaueressig2012, ReuterSaueressig2019, Saueressig2023} for reviews). In particular, since its first developments the Wetterich equation has been extensively used to study gauge theories \cite{Ellwanger1994, Reuter1993, Litim1998, Pawlowski2005}. Here, the use of a momentum-dependent regulator explicitly breaks gauge invariance, manifesting itself as an additional, regulator-dependent term in the Ward identities. The arising modified identities, called modified Slavnov-Taylor identities (mSTI) in the fRG literature, are then used as a symmetry constraint along the flow, restricting the functional form of the effective average action. In most applications, however, the mSTI, just as the Wetterich equation, cannot be solved exactly. The breaking of the mSTI, then, provides a measure for the quality of the chosen approximation. 

Despite its successes, the fRG still lacks a fully mathematically rigorous formulation, although there are some recent results \cite{Ziebell2021} in that direction. However, it is possible in some cases to implement a mathematically rigorous Wilsonian flow. 
In the context of Constructive Renormalization Group
(CRG), we can refer to \cite{Giuliani2020} for a recent review and a worked fermionic theory example. 

In particular, the Constructive Renormalization Group approach has given a rigorous construction of non-trivial fixed points in different cases (see for example \cite{Abdesselam2006, Brydges1998, Brydges2002}, and the introductions \cite{Abdesselam2013, Bauerschmidt2019, Salmhofer1999, Giuliani2020}). All these examples show that a weakly coupled RG can be non-perturbatively implemented, and a fixed point can be found without any approximation in a Banach space of interactions. 

Both the fRG and the CRG are usually formulated in Euclidean spaces.
In a recent paper \cite{DDPR2022}, together with Pinamonti and Drago, we proposed a new flow equation for scalar fields, analogous to the Wetterich equation, which works in Lorentzian signature for a large class of backgrounds (i.e. on globally hyperbolic manifolds, and it has been adapted to manifolds with timelike boundary \cite{Dappiaggi2024}). This new flow equation can be studied using the standard fRG methods and its formulation involves applications of perturbative Algebraic Quantum Field Theory (pAQFT). The latter is a mathematically rigorous framework for perturbative QFT, which combines the Haag's idea of locality \cite{Haag1992} with perturbative methods used by practitioners of QFT.

In pAQFT, a quantum field theory model on a globally hyperbolic spacetime $\mathcal M$ is given as a net of formal power series (in $\hbar$ and potentially also the coupling constant $\lambda$) with coefficients in topological $*-$algebras assigned to relatively compact regions $\Ocal\subset \mathcal M$. The net satisfies the axiom of \textit{causality} ($[\mathscr A(\mathcal O_1) , \mathscr A(\mathcal O_2)] = 0$ if $\mathcal O_2$ is not in the causal future of $\mathcal O_1$) and typically \textit{isotony} (i.e., the algebra $\mathscr A(\mathcal O_1)$ associated to a subregion $\mathcal O_1 \subset \mathcal O$ of a larger region $\mathcal O$ is contained in the algebra of the larger region, $\mathscr A_1(\mathcal O_1) \subset \mathscr A(\mathcal O)$) and the \textit{time-slice axiom} (a quantum version of the well-posedness of the Cauchy problem). To be able to formulate a model on all globally hyperbolic spacetimes in a coherent way, one uses the framework of locally covariant quantum field theory \cite{Brunetti2001}.

For the treatment of gauge theories, the axioms stated above have to be slightly weakened, as algebras are replaced by differential graded algebras \cite{GwilliamRejzner2017, Benini2019, Benini2018}. Primary examples are provided by Yang-Mills theory \cite{Hollands2008,FR} and effective quantum gravity  \cite{Brunetti2016}.

In the algebraic approach, states are normalised, positive, linear functionals on the algebra of observables $\omega : \mathscr A \to \mathbb C$ and the Hilbert space representation is obtained via the GNS construction. Hence pAQFT is well-suited to handle generic states on possibly curved Lorentzian spacetimes, and as such it appears as a good candidate to overcome some of the difficulties of the standard fRG approach, in which the Wetterich equation is formulated only in the Euclidean setting and is limited to some distinguished classes of states, e.g. the vacuum or thermal states \cite{Litim2006}. In \cite{DDPR2022}, we showed that the flow equation for the scalar field formulated using pAQFT results in $\beta-$functions which are qualitatively in agreement with the known literature.

In this paper, we take a step further in this direction, developing a flow equation for gauge theories, using the Batalin-Vilkovisky formalism.

The Batalin-Vilkovisky (BV) formalism \cite{Batalin1977, Batalin1981, Batalin1983} is a powerful method to quantize theories with local symmetries. It is a generalization of the BRST method \cite{BRS1974b, BRS1974a, BRS1975}, first developed in the context of perturbative Yang-Mills theories, to arbitrary Lagrangian gauge theories. 
The space of on-shell, gauge invariant observables is constructed as the 0-th order cohomology of the nilpotent BV operator $s$ \cite{Barnich2000}.

The rigorous treatment of the BV formalism for infinite dimensional configuration spaces was given in the context of pAQFT in \cite{FR, Fredenhagen2013, Rejzner2015,  Rejzner2011}  (for an
alternative approach developed in Euclidean signature using factorisation algebras see \cite{CoGw-1, CoGw-2, Costello2007}). Here, we follow that general construction, introducing generating functionals for correlation functions of gauge-fixed theories in the standard way \cite{Weinberg-vol2}. Following \cite{DDPR2022, Fehre2021}, we introduce a local, momentum-independent cut-off function $Q_k$. As opposed to using a non-local regulator, a local $Q_k$ preserves the unitarity of the $S-$matrix and the structure of the propagator. Since such regulator does not deal with the UV divergences of the flow equation, these have to be treated separately: in our approach, they are regularised via a Hadamard subtraction.

Throughout this paper we avoid any reference to a particular state, since we are only interested in the algebraic structure of gauge theories with the addition of a local regulator, but not on practical computations of the flow.
Our definitions for the generating functionals, and the flow equation for the effective average action, are all given purely at the algebraic level. We only restrict the class of admissible states to that of Hadamard form, in order to have a well-defined UV renormalization of the flow equation. Explicit examples will be studied in our future work.

The main goal of this paper is to characterise the symmetries of the effective average action by introducing a constraint equation that can be used alongside our Lorentzian flow equation. Although we are mostly interested in Yang-Mills type theories and gravity, the BV formalism allows also for a generalization to gauge theories with ``open algebras'', e.g. supergravity.
The main novelty is the introduction of a new field that, together with the integral kernel of the cut-off regulator, forms a contractible pair in the cohomology of the BV operator. This is analogous to the introduction of the antighost field, forming the trivial pair with the Nakanishi-Lautrup field (the Lagrange multiplier in the gauge-fixing term of the extended action). Thanks to this field, the symmetry constraint on the effective average action takes the same form as in the non-regularised case, of the standard Slavnov-Taylor identity in Zinn-Justin form \cite{Zinn-Justin1975}.

A construction of the regulator term as a gauge-fixing term, close in spirit but different in the details, can be found in \cite{Asnafi2018}.

The symmetry constraint on the effective average action, which we call \textit{effective master equation} \eqref{eq:master ren}, can be interpreted in terms of an \textit{effective} BV formalism, in the space of effective fields $\phi$ and effective BRST sources $\sigma$. The effective average action can then be decomposed as
\[
\Gamma_k = I_{ext}(\phi, \sigma) + \hbar \hat \Gamma_k \ ,
\]
where $I_{ext}$ is the classical action, extended to include antifields, ghosts, and a generalized gauge-fixing term which enforces both the standard gauge-fixing and the regulator term. The "quantum correction" $\hat\Gamma_k$ is in the cohomology of a nilpotent operator $\Slav$, playing the role of an effective BV operator in the space of functionals of $(\phi, \sigma)$. The construction of admissible effective average actions $\Gamma_k$ then is interpreted as a cohomological problem, as in the non-regularised case \cite{Barnich2000}.

In more practical terms, this means that the effective master equation \eqref{eq:master ren} can be solved by cohomological means, controlling the functional form of $\Gamma_k$, and providing the starting point for the Ansatz to be used in the solution of the flow equation.

The paper is organised as follows. In section \ref{sec:BV}, we review the construction of the BV formalism in pAQFT. In section \ref{sec:generating functionals}, we introduce the generating functionals for the Green's functions of the theory, the regulator and gauge-fixing terms, and the effective average action. In section \ref{sec:flow-eqs} we briefly derive the flow equations following \cite{DDPR2022}. Section \ref{sec:symmetries} forms the core of the paper: here we show how to derive a Zinn-Justin equation for the effective average action by enlarging the space of fields, and we give a cohomology interpretation of the effective master equation in terms of the effective BV operator $\Slav$. In section \ref{sec:consistency} we prove that the effective master equation is consistent with the flow, studying the RG flow of composite operators in the Lorentzian case, generalising the Euclidean one \cite{Pagani2016}.

\section{The functional approach to pAQFT} \label{sec:BV}

\subsection{Kinematics}\label{sec:kinematics}
In this section we briefly review the BV formalism in pAQFT, mainly to set the notation and provide the important definitions. We refer to \cite{FR, Fredenhagen2013, Hollands2008, Rejzner2015} for the papers in which the formalism has been developed and thoroughly  discussed. In particular, in \cite{FR,Fredenhagen2013} the role of categorical concepts is emphasised in order to make the general covariance of the theory manifest. The functional approach to pAQFT and the causal renormalization procedure have been thoroughly discussed in \cite{Brunetti2009}. For a more complete introduction to these subjects, one can see the book \cite{Rejzner2016} and references therein.

Let $\mathcal M$ be an oriented, time-oriented globally hyperbolic spacetime \cite{HawkingEllis73}. The physical content of the theory is specified by a field configuration space $\conf(\mathcal M)$, which we assume to be the space of smooth sections of some natural vector bundle with fibre $V$ over $\mathcal M$: $\conf(\mathcal M) = \Gamma(\mathcal M , V)$. Compactly supported configurations are denoted by  $\conf_0(\mathcal M) = \Gamma_0(\mathcal M , V)$.

The field configuration space can be equipped with the natural Fréchet topology of uniform convergence of all the derivatives on compact sets. Using coordinate charts on $\mathcal M$ it is sufficient to specify the topology of $\mathscr E(\Omega)$ on an open subset $\Omega \subset \mathbb R^n$; this is generated by the family of seminorms \cite{Rejzner2016}
\[
\norm{\varphi}_{j,K} := \sup_{\abs{\alpha} \leq j,x}\abs{\partial^\alpha \varphi(x)} \ ,
\]
where $K\subset \Omega$ is compact and $\alpha \in \mathbb N^N$ is a multi-index, $N \in \mathbb N$. 
The topology on compactly supported field configurations $\conf_0(\Omega)$, $\Omega \subset \mathbb R^n$ can be defined as an inductive limit topology, using the fact that $\Omega$ can be exhausted by a family $K_1\subset K_2\subset\dots $ of compact sets, and each $\conf(K_n)$ is equipped with the Frech\'et topology as above. For more detail on this construction see \cite{Bourbaki} and \cite{Rudin} for the explicit construction of the seminorms. Another useful discussion of the topology on the test function space can be found in chapter 10 of \cite{Voigt}.

Classical observables are smooth maps (in the sense of Bastiani \cite{Bastiani1964}) from the configuration space to real numbers, i.e. functionals $F: \conf(\mathcal M) \to \mathbb R$. For the purpose of quantisation, we will also consider complex-valued functionals.

We now recall the relevant definitions for the functionals of physical interest.
\begin{definition}
The \textit{support} of a functional $F$ is
\begin{multline}
\supp F = \{ x \in \mathcal M \ | \ \forall \ \text{neighbourhoods} \ U \ \text{of} \ x \ \\
 \exists \varphi_1 , \ \varphi_2 \in \conf(\mathcal M) , \ \supp \varphi_2 \in U \ | \ F(\varphi_1 + \varphi_2) \neq F(\varphi_1) \}
\end{multline}
\end{definition}

\begin{definition}\label{df:Bastiani}
The \textit{derivative} of $F$ at $\varphi \in \conf(\mathcal M)$ in the direction $\psi\in \conf(\Mcal)$ is defined by
\begin{equation}
\langle F^{(1)} , \psi \rangle := \lim_{t \to 0} \frac{1}{t} \left ( F(\varphi + t \psi) - F(\varphi) \right )
\end{equation}
A functional is called continuously differentiable in an open neighbourhood $U$ of $\ph$, if  $F^{(1)}$ exists for all points in $U$ and all directions $\psi$ and the map $F^{(1)}:\conf\times\conf\rightarrow \RR$ is continuous in the product topology.
\end{definition}

\begin{definition}
A \textit{local functional} is a functional of the form
\begin{equation}
F(\varphi) = \int_x \alpha(x , \varphi, \partial \varphi , ...) \ ,
\end{equation}
for some function $\alpha$ that depends on $\varphi$ and its derivatives up to some fixed order. We denote the space of local functionals by $\Floc(\Mcal)$. 
\end{definition}
In other words, a local functional can be written as an integral of a local density, depending on the jet of $\varphi$ at the point $x$. In the above definition, we introduced the notation $\int_x = \int \dd \mu(x)$, where $\dd \mu(x)$ is the invariant measure on $\mathcal M$ induced by the metric.
\begin{definition}
    A \textit{multilocal functional} is a finite sum of functionals in the form
\begin{equation}
F(\varphi) = \int_{x_1} \ldots \int_{x_n} \beta(x_1,\dots,x_n; \varphi(x_1),\dots, \varphi(x_n), \partial \varphi(x_1),\dots \partial \varphi(x_n), \dots) \ ,
\end{equation}
up to some finite order in derivatives. We denote the space of multilocal functionals by $\F(\Mcal)$.
\end{definition}


The space of functionals can be equipped with a natural weak topology induced from the natural topology of distributions: we say that a sequence $F_n\in \mathscr F_{\mu c}$ converges to $F \in \mathscr F_{\mu c}$ in the limit $n \to \infty$ if $F_n^{(m)}(\varphi)$ converges to $F^{(m)}(\varphi)$ in $\Gamma(V^{\boxtimes m}\rightarrow  M^m)'$, in the \textit{H\"ormander topology} (see Definition~4.12 in \cite{Rejzner2016}) for every $n$ and for every field configuration $\varphi \in \conf(\mathcal M)$.

$\F(\Mcal)$ is an algebra with respect to the pointwise product:
\[
FG(\varphi) := F(\varphi) G(\varphi) \,.
\]
Vector fields on $\conf(\mathcal M)$ are smooth maps from $\conf(\mathcal M)$ to $\conf_0(\mathcal M)$. We consider the space of multilocal, smooth, compactly supported vector fields $\mathscr V(\mathcal M)$. They act on $\mathscr F(\mathcal M)$ as derivations,
\begin{equation}
\partial_X F(\varphi) := \langle F^{(1)}(\varphi) , X(\varphi) \rangle \ .
\end{equation}

The action of vector fields as derivations and the Lie bracket between two vector fields can be generalized to the \textit{Schouten bracket} between \textit{alternating multivector fields}, defined as smooth, compactly supported multilocal maps from $\conf(\mathcal M)$ to $\Lambda \conf^* (\mathcal M)' = \bigoplus \Lambda^n \conf^* (\mathcal M)'$, where
$$\Lambda^n \conf^* (\mathcal M)'\subset \Gamma((V^*)^{\boxtimes n}\rightarrow  M^n)'\,,$$  by a slight abuse of notation, denotes the space of compactly supported distributional sections which are totally antisymmetric under permutations of their arguments. Here $\boxtimes$ is the exterior tensor products of vector bundles and $V^*$ is the bundle dual to $V$. We set $\Lambda^0 \conf_0(\mathcal M) \equiv \mathbb R$ (for more details, see section 3.4 of \cite{Rejzner2016})  The space of alternating multivector fields forms a graded commutative algebra $\Lambda \mathscr V(\mathcal M)$ with respect to the product $X \wedge Y (\varphi) = X(\varphi) \wedge Y(\varphi)$. 

The Schouten bracket is an odd Poisson bracket on this algebra
\[
\{ \cdot , \cdot \} : \Lambda^n \mathscr V(\mathcal M) \times \Lambda^m \mathscr V(\mathcal M) \to \Lambda^{m+n-1} \mathscr V(\mathcal M) \ .
\]

It satisfies the following properties:
\begin{enumerate}
\item Graded antisymmetry: $\{Y, X \} = -(-1)^{(n-1)(m-1)} \{X,Y \}$ ;
\item Graded Leibniz rule: $\{X , Y \wedge Z \} = \{ X , Y \} \wedge Z + (-1)^{nm}\{X, Z \} \wedge Y$;
\item Graded Jacobi rule: \[ 
\{X , \{Y, Z \}\} - (-1)^{(n-1)(m-1)} \{ Y , \{ X, Z \} \} = \{ \{ X , Y \} , Z \} \ .
\]
\end{enumerate}

In the standard approach to the BV formalism, vector fields are identified with \textit{antifields}. The action of vector fields on classical observables can be formally written as
\begin{equation}
\partial_X F(\varphi) = \int_x X(\varphi) \frac{\delta F(\varphi)}{\delta \varphi(x)} \ .
\end{equation}
Identifying the functional derivatives $\frac{\delta}{\delta \varphi}$ with the antifields $\varphi^\ddagger$, we see that the algebra of alternating multivector fields is generated by fields and antifields, and the Schouten bracket between two multivector fields $X,Y$, with degree respectively $n, \ m$, is interpreted in the physics literature as the \textit{antibracket}:
\begin{equation}
\{ X , Y \} = - \int_x \left ( \frac{\delta X}{\delta \varphi(x)} \frac{\delta Y}{\delta \varphi^\ddagger(x)} + (-1)^{n} \frac{\delta X}{\delta \varphi^\ddagger(x)} \frac{\delta Y}{\delta \varphi(x)} \right ) \ .
\end{equation}

\subsection{Dynamics and symmetries}\label{sec:dynamics}
To introduce a variational principle in the theory, we define a \textit{Lagrangian} $L$ as a natural transformation between the functor of compactly supported smooth functions spaces $\mathcal D: \mathbf{Loc} \to \mathbf{Vec}$ and the functor of $\mathscr F_{\text{loc}}:\mathbf{Loc} \to \mathbf{Vec}$, satisfying $\supp(L(f)) \subseteq \supp(f)$ and the \textit{additivity rule}
\begin{equation}
L(f+g+h)  = L(f+g) + L(g+h) - L(g) \ , \ f, g, h \in \mathcal D(\mathcal M) \ ,
\end{equation}
where $\supp f \cap \supp h = \emptyset$. The action $I(L)$ is defined as an equivalence class of Lagrangians, where two Lagrangians are equivalent if
\begin{equation} \label{eq:equivalence relation between lagrangians}
L_1 \sim L_2 \Leftrightarrow \supp(L_1 - L_2)(f) \subset \supp \dd f \ .
\end{equation}

By definition, the Lagrangian takes an element $f \in \mathcal D(\mathcal M)$ and maps it into the space of local functionals, which depend on the field configurations $\varphi$, which we then denote the notation $L(f)$. When we also need to emphasise the dependence on $\varphi$, we use the notation $L(f)[\varphi]$. We do not explicitly require differentiability with respect to the test function $f$, but in the examples we consider in this work, the Lagrangians $L(f)$ will in fact be differentiable or even smooth with respect to $f$. 


We do not assume differentiability in f, and we also do not use it later. However, it does hold in many examples. In particular, our default way of constructing generalised Lagrangians is to multiply fields by appropriate test functions (as explained below). In this case, the Lagrangians are smooth with respect to those test functions, as they are smooth with respect to the fields.

The definition in terms of natural transformations, first introduced by Brunetti, D\"utsch, and Fredenhagen \cite{Brunetti1999}, is needed because neither the spacetime nor the support of the field configurations are typically compact, so we cannot simply define the action as an integral over $\mathcal M$. We will comment in Section \ref{subsec:IR regularization} on the explicit construction of a generalised Lagrangian from typical Lagrangian densities of QFT, such as the $\phi^4$ or the Yang-Mills type theories.

From now on, we will drop $\mathcal M$ from the notation, since we always work on a fixed spacetime $\mathcal M$.

The \textit{Euler-Lagrange} derivative of $I$ is defined by
\begin{equation}
\langle I^{(1)}(\varphi) , g \rangle := \langle L^{(1)}(f)(\varphi) , g \rangle \ ,
\end{equation}
and the field equations are
\begin{equation} \label{eq:EOM}
I^{(1)}(\varphi) = 0 \ .
\end{equation}
	
The space of solutions to \eqref{eq:EOM} is denoted by $\conf_{os} $, and multilocal functionals on this subspace are called \textit{on-shell} functionals, $\mathscr F_{\text{os}}$. This space can be characterised \cite{Rejzner2011} as the quotient space $\mathscr F_{\text{os}} = \mathscr F / \mathscr F_{0}$, where $\mathscr F_{0}$ is the ideal generated by the equations of motion. We assume that this ideal coincides with the space of functionals that vanish on-shell, which is the case for theories of physical interest. For the precise statement of appropriate regularity conditions, see e.g. \cite{henneaux1990lectures}.

The BV formalism aims at a homological description of on-shell, gauge-invariant functionals. We can now take the first step in the BV construction, giving a geometrical description of on-shell functionals. In fact, these are characterised by the zeroth homology of a certain differential, called the \textit{Koszul map}.
Given an action $I$, the \textit{Koszul map} is defined as the operator acting on alternating multivector fields $X \in \Lambda \mathscr V $ as
\begin{equation}\label{def:Koszul-map}
\delta_K(X) := \{ X , I \} \ ,\  X \in \Lambda \mathscr V  \ .
\end{equation}

The Koszul map $\delta_{K}$ is a differential, i.e. $\delta_{K}^2=0$ and the exterior algebra $\Lambda \mathscr V $ is a graded vector space (with the grading given by the exterior power). The differential increases the degree by one and we can write down the following complex:
\begin{equation}
\ldots \xrightarrow{\delta_{K}^{(-3)}} \underset{-2}{\Lambda^2 \mathscr V } \xrightarrow{\delta_{K}^{(-2)}} \underset{-1}{\Lambda^1 \mathscr V } \xrightarrow{\delta_{K}^{(-1)}} \underset{0}{\mathscr F  } \xrightarrow{\delta_{K}^{(0)}} 0 \ .
\end{equation}

Here we use the superscript in $\delta_{K}^{(n)}$ to indicate the degree of the vector spaces where it is acting. The $n$-th cohomology of this complex is defined as 
\[
H^n(\delta_K):=\frac{{\rm Ker}\, \delta_k^{(n)}}{{\rm Im}\, \delta_k^{(n-1)}}
\]

The image of $\delta^{(-1)}_K$ is contained in the space of functionals that vanish on-shell, $\mathscr F_0$. Under the appropriate regularity conditions \cite{henneaux1990lectures} (when $\mathscr F_0$ coincides with the ideal generated by the equations of motion), it is possible to prove that $\text{Im}\, \delta_K = \mathscr F_0$ \cite{henneaux1990lectures}, and the space of on-shell functionals $\mathscr F_{\text{os}}$ is characterised by the 0th cohomology of the Koszul operator $H^0(\delta_K)=\mathscr F/\mathscr F_0$ \cite{Rejzner2011}. For more on homological algebra, see \cite{Weibel}.

Next we discuss symmetries. A vector field $X\in\V$ is a symmetry of a Lagrangian $L$ if the corresponding Euler-Lagrangian derivative $I^{(1)}$ of $I(L)$ satisfies $\langle I^{(1)}, X \rangle\equiv 0$. Symmetries form a Lie-subalgebra of $\V$ and we assume that each symmetry can be written as $X=\rho(\xi)+X_0$, where $X_0$ is a symmetry vanishing on-shell, $\xi\in \g$, where $\g$ is a Lie-algebra that can be expressed as a space of smooth sections of some vector bundle $\mathscr{k}\xrightarrow{\pi}\Mcal$ and $\rho:\g\rightarrow \V$ is a Lie-algebra morphism. Physical examples include Yang-Mills theories and gravity with the Einstein-Hilbert action.

The space of functionals invariant under the action of $\g$
has a cohomological description in terms of the \textit{Chevalley-Eilenberg cochain complex}. Its underlying graded algebra is the space of multilocal maps from $\conf$ to $\Lambda \mathscr g'$, where $\Lambda^n \mathscr g'$ denotes the dual of $\Gamma(\mathscr{k}^{\boxtimes n}\rightarrow \Mcal^n)$, consisting of distributions totally antisymmetric in their arguments.  The grading is called the \textit{pure ghost number} $\# \text{pg}$, and the forms are known in the physics literature as \textit{ghosts}.

The Chevalley-Eilenberg differential $\gamma_{ce}$ in degree zero is given by $\gamma_{ce}F(\xi)=\partial_{\rho(\xi)} F$, in degree one it acts through the Lie bracket
and on a general functional of gauge fields and ghosts by
\begin{multline*}
\gamma_{ce}F(A; c_0, \ldots, c_q) := \sum_{i=0}^q (-1)^{i}\partial_\rho(c_i)(\iota_{(c_0,\ldots,c_{i-1}, c_{i+1},\ldots, c_q}F)(A) \\
+\sum_{i<j}^q (-1)^{i+j}F(A,[c_i,c_j],\ldots,c_{i-1},c_{i+1},\ldots,c_{j-1},c_{j+1},\ldots,c_q) \ ,
\end{multline*}
where $\iota$ denotes the insertion of $n$ vector fields on an $n$ form. In particular, for $F \in \mathscr F(\mathscr E)$ it holds
\begin{equation} \label{eq:gauge-transformation-functional-a}
\gamma_{ce} F(\A) = \langle F^{(1)} , D_\A c \rangle \ .
\end{equation} 

The Chevalley-Eilenberg differential generates the Chevalley-Eilenberg cochain complex,
\[
0 \xrightarrow{\gamma_{ce}^{(-1)}}  \underset{0}{\mathscr F(\mathscr E)} \xrightarrow{\gamma_{ce}^{(0)}} \underset{1}{C^\infty(\mathscr E, \Lambda^1 \g[1]) }\xrightarrow{\gamma_{ce}^{(1)}} \underset{2}{\mathscr{C}^\infty(\mathscr E , \Lambda^2 \g[1])} \ldots \ .
\]

From the action of the Chevalley-Eilenberg differential on a functional of the field configuration space, Eq. \eqref{eq:gauge-transformation-functional-a}, we see that the kernel of $\gamma_{ce}^{(0)}$  characterises the gauge-invariant functionals. On the other hand, the image of $\gamma_{ce}^{(-1)}$  is simply the $0$ functional. It follows that the cohomology in degree $0$ of the Chevalley-Eilenberg complex describes the gauge-invariant functionals, $H^0(\gamma_{ce}) = \mathscr F^{\text{inv}}$.


In order to describe the space of gauge-invariant functionals on-shell, we use the Batalin-Vilkovisky graded algebra $\mathscr{BV}$, defined as the space of compactly supported multilocal multi-vector fields on the extended configuration space $\overline{ \conf} := \conf[0] \oplus \mathscr g[1]$, where the number in brackets denotes the pure ghost number. This space can be equipped with a locally complex topology, which is the appropriate variant of the H\"ormander topology introduced in Section~\ref{sec:kinematics}. See \cite[Chapter 3]{Rejzner2016} for more details.

We denote the ghost fields as $c \in \mathscr g[1]$, and the field configuration multiplet, element of the extended configuration space, with $\varphi := \{ \varphi, c\} \in \overline{\conf}$. $\mathscr{BV}$ can also be seen as the the space of multilocal functionals on the odd cotangent bundle of $\overline{ \conf}$. The elements of the cotangent space are the antifields for the fields and the ghosts, and we identify the functional derivatives $\varphi^\ddagger := \frac{\delta}{\delta \varphi}$ as the "basis" for the fibre $\conf^*[-1] \oplus \mathscr g^*[-2]$, where $*$ means taking sections of the corresponding dual bundle. The grading on 
fibre is related to the \emph{antifield number} $\#af$ (equal to minus the number in square brackets).

The functional-analytic considerations of Section~\ref{sec:kinematics} apply also to the extended configurations space, but one needs to keep track of the grading. Local and multilocal functionals are defined completely analogously. For more detail, see Chapter~3.3 of \cite{Rejzner2016}. For simplicity of notation, we will now use $\conf$ to denote the extended configuration space.

The $\mathscr{BV}$ algebra has two gradings, the ghost number $\#gh$ and the antifield number $\#af$, related to the pure ghost number via $\#gh = \#pg - \#af$. Seen as the space of graded multivector fields, $\mathscr{BV}$ is equipped with a graded Schouten bracket, which we also denote by $\{ \ . \ , \ . \ \}$.

The BV differential is defined as
\begin{equation}\label{eq:BVanti}
s_{BV}X:=\{X,L(f)+L_{ce}(f)\}\,,
\end{equation}
where $f\equiv 1$ on the causal completion of the support of $X$ and
$L_{ce}$ is obtained from $\gamma_{ce}$ by introducing a cut-off function (to ensure compact support). In what follows we will use the notation $\{X,I+I_{ce}\}$ when we mean the bracket with the Lagrangian for the  choice of $f$ as above. In particular, we introduce $f$ in $L_{ce}$ in the natural way, multiplying the ghosts with $f$. 

A crucial requirement of the BV formalism is that the BV operator $s_{BV}$ is an actual differential, $s^2_{BV}$. 
To ensure the nilpotency of $s_{BV}$, a sufficient condition is to choose a Lagrangian satisfying the \textit{Classical Master Equation} (CME) modulo boundary terms, which takes the expression
\begin{equation}
\{ L_{cl}(f) , L_{cl}(f) \}\sim 0 \,,
\end{equation}
where the equivalence relation is defined in Eq. \eqref{eq:equivalence relation between lagrangians}. For Yang-Mills theories and gravity, the classical Lagrangian $L_{cl}(f)$ is simply $L_{cl}(f) = L(f) + L_{ce}(f)$, as it satisfies the CME (\textit{cf.} the example \ref{sec:YM-example-classical}), but sometimes additional terms, with a higher number of antifields, are needed \cite{Gomis1994}.


The BV differential can be expanded in antifield number into:
\begin{equation}
s_{BV} = \delta_K + \gamma_{ce} + ... \ .
\end{equation}
In the formula above, the dots represent the fact that, in the most general case, $\delta_K + \gamma_{ce}$ fails to be a nilpotent operator. In the case of Yang-Mills theories and gravity, the sum contains the first two terms only.

The cohomology of $\gamma_{ce}$ characterises invariant functionals, and the homology of $\delta_K$ the on-shell functionals. The main theorem of homological perturbation theory guarantees that the gauge-invariant, on-shell observables of the theory can be characterised by the zeroth cohomology of the BV differential, $\mathscr F^{\text{inv}}_{\text{os}} = H^0(\mathscr{BV} , s_{BV})$ \cite{Barnich2000, FR, Fredenhagen2013, Rejzner2011}. In fact, it can be proven that
\begin{equation}\label{eq:main theorem of hom pt}
H^0(\mathscr{BV}, s_{BV}) = H^0(H^0( \mathscr{BV} , \delta_K ) , \gamma_{ce}) \ .
\end{equation}


\subsection{IR regularization and the adiabatic limit} \label{subsec:IR regularization}
In this section, we briefly comment on the construction of generalized Lagrangians. The motivation is the basic observation that, since a globally hyperbolic spacetime cannot be compact, an action integral $I = \int_x L(\varphi)$ is necessarily divergent, so we introduce the cutoff $f$ to make it finite. Some computations then depend on the support of $f$, and it is often necessary to take the limit in which the support of $f$ is extended over the whole spacetime $\mathcal M$.

In particular, the flow equations in \cite{DDPR2022} depend on $f$, but the adiabatic limit can be taken on the level of
the beta functions, obtained through functional derivatives of the flow equation.

More precisely, assuming without loss of generality that $f=1$ in a neighbourhood of an origin of the spacetime, the adiabatic limit can be taken replacing $f$ with $f(x/n)$, with $n \in \mathbb N$, and eventually considering the limit $n \to \infty$ for the quantities of interest. 

A straightforward procedure to build a generalized Lagrangian from the Lagrangian density of some theory is to consider
\[
I = \int_x f L \ ,
\]
for some test function $f$. The test function is assumed to be equal to $1$ in some finite region of spacetime, so that the action functional is well-defined. 

However, with this choice of regularization the Quantum Master Equation \eqref{eq:QME} for Yang-Mills theory does not hold and it is the necessary condition to ensure that the $S-$matrix is gauge independent.

Hence, we choose a different regularisation for the generalised Lagrangian. We introduce a pair $f=(f_\varphi,  f_c)$ of test functions, and, given a particular Lagrangian density, we define a generalised Lagrangian as
\begin{equation}
I(f_\varphi,  f_c) = \int_x L(f_\varphi \varphi,  f_c c) \ .
\end{equation}
In other words, we introduce a set of regularised, compactly supported variables $(f_\varphi \varphi,  f_c c)$, substituting the original gauge fields $\varphi$ and ghosts $c$.
We choose  $f_\varphi$ and $ f_c$ so that $f_\varphi\equiv 1$
on the support of $ f_c$. We also regularise the antifields. We replace $\frac{\delta}{\delta \ph(x)}$ with 
$\frac{\delta}{\delta f_\varphi\ph(x)}$, which is a shorthand for taking a derivative and dividing the result by $f_\varphi$. Similarly, we replace $\frac{\delta}{\delta c(x)}$ with 
$\frac{\delta}{\delta  f_c c(x)}$. 

For such choice of test functions, the classical master equation can be satisfied exactly (not just modulo the equivalence relation on the space of generalized Lagrangians).

For example, the Yang-Mills Lagrangian regularised by multiplying the vector potential $A$ with the appropriate test function $f_A$ is invariant under the regularised gauge transformation, where one replaces the ghosts $c$ (corresponding to gauge parameters) with $c  f_c$, since 
$f_A\equiv 1$ on the support of $ f_c$, by assumption. As a consequence, the corresponding BV-extended regularised  Lagrangian satisfies the CME in the strict sense. We discuss this in detail in Section~\ref{sec:YM-example-classical}.

In section \ref{sec:mSTI}, we will show explicitly that, for Yang-Mills-type theories (that are, both possibly non-abelian gauge theories as QED and QCD, and gravity), also the Quantum Master Eqaution (QME) is exactly satisfied by a generalized Lagrangian with this choice of regularization.

In \eqref{eq:BVanti}, the BV differential is given as the antibracket with the extended action, $(f_\varphi \varphi,  f_c c)$ and we choose  $f_\varphi$ and $ f_c$ so that $f_\varphi =  f_c = 1$ on the support of $X$. Hence 
the test functions do not influence the gauge transformation properties of $F$.

With the above choice of regularisation, we can identify a local functional with a generalized Lagrangian $f\mapsto F(
\chi f)$. In the following, for notational convenience, we will use the same symbol for the functional and the corresponding generalized Lagrangian.

	\subsection{Gauge fixing and non-minimal sector}
In the BV formalism, the gauge fixing is performed in two steps. First, we need to enlarge the BV complex to include the \textit{antighosts} and the \textit{Nakanishi-Lautrup fields}. These form a contractible pair, i.e., $s_{BV} \bar c = i b$ and $s_{BV} b = 0$, so they do not contribute to the cohomology of $s_{BV}$ \cite{Barnich2000}\footnote{Note that the authors of \cite{Barnich2000} call BRST differential what we call the BV differential. We reserve the term ``BRST differential'' for the specific part of $s$ that coincides with the differential proposed by Becchi, Rouet, and Stora.}. Together with $\bar c$ and $b$, we introduce the corresponding antifields. For notational simplicity, we denote the new space of field configurations $\conf$, so that now the field multiplet $\varphi$ includes the antighosts and the Nakanishi-Lautrup fields too, and the extended BV complex is again denoted by $\mathscr{BV}$.

The gauge-fixing is performed as an automorphism of the BV complex, leaving the antibracket invariant, and such that the transformed part of the action that does not contain antifields has a well posed Cauchy problem. First, we introduce the gauge-fixing Fermion $\Psi$ as a fixed generalized Lagrangian with $\#gh = -1$ and $\#af = 0$.
We then define the automorphism by setting
\begin{equation}\label{eq:alpha def}
\alpha_\Psi(F) = F \left (\varphi,c,\varphi^\ddagger + \frac{\delta \Psi (f_\varphi,f_c)}{\delta f_{\varphi} \varphi}, c^\ddagger + \frac{\delta \Psi (f_\varphi,f_c)}{\delta f_cc}  \right) \ ,
\end{equation}
This means that in all expressions we replace $\varphi^\ddagger$ with $\varphi^\ddagger +  \frac{\delta \Psi}{\delta f_{\varphi}\varphi}$. Similarly for $c^\ddagger $. This change of variables makes sense, since this derivative is smooth (recall that the first derivative of a local functional is always smooth).
It can be checked that this is a canonical transformation with respect to the antibracket. See \cite{Gomis1994,FR} for more details.

 The gauge-fixed Lagrangian is now $L_{ext} := \alpha_\Psi(L_{cl})$. For theories that are linear in the antifields, as Yang-Mills and gravity, the gauge-fixed action $I_{ext}$ takes the usual form (here we ommitted the test functions from the notation, for clarity)
\begin{equation}
I_{ext}:= \alpha_\Psi(I_{cl}) = I + I_{af} + s_{BV} \Psi = I + I_{af} + I_{gh} + I_{gf} \ ,
\end{equation}
where $I$ is the original, gauge-invariant action and the antifield, ghost, and gauge-fixing sectors are respectively $I_{af}$, $I_{gh}$, and $I_{gf}$. Note that $I_{af}$ contains $I_{ce}$ and a term arising from introducing the non-minimal sector, i.e $\int_x  if_cb\frac{\delta}{\delta f_A \bar{c}}$.

The gauge-fixed BV differential $s = \alpha_\psi \circ s_{BV} \circ \alpha_\psi^{-1}$ can be expanded in \emph{total antifield number} ($\#ta$ of any antifield is one) into
\begin{equation}
s = \{ \ \cdot \ , I_{ext} \} = \delta + \BRST \ .
\end{equation}
$\delta$ is the Koszul operator for the field equations derived from $I+ I_{gh} + I_{gf}$, and $\gamma_{BRST}$ is the gauge-fixed BRST operator. The space of classical observables  is now isomorphic to $H^0(\mathscr{BV} , s)$, where the isomorphism is given by $\alpha_\psi$. 

Using the main theorem of homological perturbation theory, it is possible to write the 0th-cohomology of $s$ in terms of $\delta$ and $\BRST$ as
\begin{equation}
H^0(\mathscr{BV},) = H^0(H^0( \mathscr{BV} , \delta ) , \BRST) \\,,
\end{equation}
so it is exactly the space of BRST-invariant observables on-shell.

In the following, we will assume that the gauge fixing has been already performed and will denote $I_{ext}$ simply as $I$.

\subsubsection{Example: Yang-Mills theory} 
\label{sec:YM-example-classical}

Let's discuss the BV construction in the concrete example of Yang-Mills (YM) theories \cite{Hollands2008}. These are defined by a globally hyperbolic spacetime $(\mathcal M, g)$, and a principal bundle $P \to \mathcal M$, where $G = SU(N)$ is a compact Lie group. In a local trivialisation of the bundle, a connection $\A$ is a $\g -$valued 1-form, and the Yang-Mills configuration space $\mathscr E = \Omega_1 (\mathcal M , \mathscr g)$ are 1-forms with values in the Lie algebra $\mathscr g = \mathscr {su}(N)$, for some $N$. The fundamental invariant associated with the connection $\A$ is the curvature $F := \dd \A + \lambda_{YM} \frac{1}{2} [ \A , \A ]$, where $\lambda_{YM}$ is the YM coupling constant, and the invariant action for Yang-Mills theories can be written in terms of the following generalized Lagrangian:
\[
I_{inv}(f_\A) = - \frac{1}{2} \int_x \Tr{F(f_\A \A) \wedge \star F(f_\A \A) } \ ,
\]
where $\star$ is the Hodge operator and $\Tr$ is the trace in the adjoint representation of $\mathscr{su}(N)$ defined by the Killing-Cartan metric. 

The action $I_{inv}$ is invariant under the action of a $G$-valued function $g$ of the form
\[
\A \to \A_g = - \frac{i}{\lambda_{YM}} g \dd(g^{-1}) + g \A g^{-1} \ ,
\]
which infinitesimally takes the form of standard gauge transformations
\[
\delta_\xi \A = D_\A \xi = \dd \xi + i \lambda_{YM} [\A, \xi] ,
\]
where $\xi$ is a gauge parameter, i.e. $\xi\in \mathscr{C}^{\infty}(\mathcal M , \mathscr su (N))$.
$D_\A$ is the covariant extension of the exterior derivative, which acts as $D_\A = \dd + i \lambda_{YM} \A$ on the fundamental representation, where the action of $\A$ is by multiplication, and as in the formula above in the adjoint representation of the gauge group $SU(N)$ on its Lie algebra, that is, on the gauge field itself. Indeed, for constant gauge transformations (that is, for elements of the gauge group itself), the gauge field transforms in the adjoint representation as $\A \to \A_g = g \A g^{-1}$.

The EOM derived from the invariant YM action are
\begin{equation} \label{eq:EOM-YM}
D_\A \star F = 0 \ .
\end{equation}

The Chevalley-Eilenberg complex is the algebra $\mathscr {CE} = \mathscr F(\mathscr E \oplus \g[1])$ of microcausal functionals of the gauge fields $\A$ and ghosts $c$. The Chevalley-Eilenberg differential is defined by its action on the fields as
\[
\gamma_{ce} \A = D_\A c \ , \quad \gamma_{ce} c =  - \frac{i \lambda_{YM}}{2} [c , c] \ .
\]
The BV algebra is the algebra of microcausal functionals on the odd cotangent bundle of the extended configuration space $T^*(\overline{\mathscr E})$. Identifying the elements of the tangent space with antifields, their basis is given by the derivatives $\A^\ddagger = \frac{\delta}{\delta \A}$ and $c^\ddagger = \frac{\delta}{\delta c}$. The Koszul map now by definition acts on a generic functional $X$ as
\[
\delta_K (X) = \{ X , I_{inv} \} = \int_x \frac{\delta X}{\delta \A^{\ddagger}} D_\A \star F \ .
\]

The Chevalley-Eilenberg contribution to the action follows from the action of the Chevalley-Eilenberg differential, and it is given by
\[
I_{ce}(f_A,f_{c}) = \int_x \left ( \A^\ddagger D_\A(f_c c )  - \frac{i\lambda_{YM}}{2} c^\ddagger [ c, f_c c]   \right ) \ ,
\]
where we used the fact that $f_\A=1$ on the support of $ f_c$ and in the second term one factor $f_c$ was canceled by $1/f_c $ present in $\frac{\delta}{\delta f_c c}$.

We can now check by direct computation that the combination $I_{inv} + I_{ce}$ satisfies the CME, i.e. that $\{I_{cl}, I_{cl} \} = 0$. First we note that
\[
\{ I_{inv} + I_{ce} , I_{inv} + I_{ce} \} = 2 \{ I_{inv} , I_{ce} \} + \{ I_{ce} , I_{ce} \} \ ,
\]
since we have $\{ I_{inv} , I_{inv} \} = 0$ as $I_{inv}$ does not contain antifields; the contribution $\{ I_{inv} , I_{ce} \}$ vanishes thanks to gauge invariance of the YM action. It remains to check that the last term vanishes as well. This is given by
\[
\{ I_{ce} , I_{ce} \} = 2 \int_x \left ( \frac{\delta D_\A f_cc}{\delta \A} D_\A (f_cc) - \frac{i \lambda_{YM}}{2} \frac{\delta}{\delta f_c c} (D_\A (f_cc) + [f_cc,f_cc]) [f_cc,f_cc] \right ) \ .
\]
Now, we have
\begin{gather*}
\frac{\delta D_\A f_cc}{\delta \A } D_\A f_cc = i \lambda_{YM} [ D_\A f_cc , f_cc] \ , \ \text{and} \\
\frac{i\lambda_{YM}}{2}\frac{\delta}{\delta f_cc} (D_\A f_cc + \frac{1}{2} [f_cc,f_cc]) [f_cc,f_cc] = \frac{i\lambda_{YM}}{2} D_\A  [f_cc,f_cc] + [f_cc, [f_cc,f_cc]] \ .
\end{gather*}
Since $[f_cc,[f_cc,f_cc]] = 0$ by Jacobi identity, the two remaining terms cancel; therefore the classical action $I_{cl} = I_{inv} + I_{ce}$ satisfies the CME. 

The EOM \eqref{eq:EOM-YM} are not normally hyperbolic, since the action satisfies the Noether identities associated with gauge invariance,
\[
\int_x \frac{\delta I_{inv}}{\delta \A(x)} D_\A f_cc(x) = 0 \ .
\]
Therefore, we need to perform gauge fixing. This is an automorphism of the BV algebra that leaves the antibracket invariant, and such that its action on $I_{cl}$ produces a gauge-fixed action that satisfies the CME and have normally hyperbolic EOM. For a generic gauge-fixing function $\mathcal G(\A)$, the gauge-fixing Fermion, which is again a generalized Lagrangian, takes the form
\[
\Psi(f_A,f_c)[A,c,b,\bar{c}] = i \int_x f_A\bar c \left ( \alpha f_c \frac{b}{2} + \mathcal G(f_A\A) \right ) \ ,
\]
where $\alpha$ is a constant.
The gauge-fixed action $I = \alpha_{\psi}(I_{cl})$ is the sum of two terms only, since the classical action is linear in the antifields:
\[
I(f_A,f_c) = I(f_A,f_c) \left [\varphi , \varphi^\ddagger + \frac{\delta \Psi (f_A,f_c)}{\delta f_{\varphi} \varphi}\right] = I_{cl}(f_A,f_c) + s_{BV} \Psi(f_A,f_c) \ .
\]

It follows by direct computation that the gauge-fixed action for Yang-Mills theories takes the familiar expression
\begin{multline} \label{eq:YM-gauge-fixed-action}
I(\varphi, \varphi^\ddagger) = - \frac{1}{2} \int_x  \Tr{F(f_\A \A) \wedge \star F(f_\A \A) }\\- i \int_x \Tr [ d \bar c , D_A f_cc ] - \int_x \frac{\alpha}{2} f_c b^2 + f_c b \mathcal G(\A) +\ \textrm{antifield terms}\,,
\end{multline}
where now the field multiplet is to be understood as $\ph=(A,c,\bar{c},b)$.

In this example, we can choose in particular the gauge-fixing functional
\[
\mathcal G(\A) = \star \dd \star \A = \nabla_\mu \A^\mu \ ,
\]
known as Lorenz gauge. The linearisation of the action into $I_0 + V$ around the trivial field configuration $\varphi = 0$ produces the free wave operator $P_0$; in the case $\alpha= 1$ it can be written in terms of the Hodge laplacian $\square_H := - (\tilde \delta \dd + \dd \tilde \delta )$, where $\tilde \delta$ is the co-differential, as \cite{Rejzner2016}
\[
P_0 =
\begin{pmatrix}
\square_H + \dd \tilde \delta - \dd & 0 & 0 & 0 \\
\tilde \delta & -1 & 0 & 0 \\
0 & 0 & 0 & i \square_H \\
0 & 0 & - i \square_H \\
\end{pmatrix} \ .
\]
The expression for the advanced and retarded fundamental solutions can be found in Ref. \cite{Rejzner2016}.

\subsection{Quantum theory}
In order to quantize the theory, we split the action into $I = I_0 + V$, where $I_0$ is a term quadratic in the fields, with $\# af = 0$, and $V$ is the remaining, interacting term.
Denote $\delta_0 := \{ \ . \ , I_0 \}$.

After gauge fixing, the Euler-Lagrange derivative of $I_0$ induces a normally hyperbolic operator $P_0 : \conf \to \conf^*\subset  \conf_0'$, where the operation $*$ means that we have dualised all the bundles. Being normally hyperbolic implies that $P_0$ has unique advanced and retarded propagators $\Delta_{A,R}: \C_0^*\rightarrow \C$, satisfying
\begin{align}
P_0 \Delta_{A,R} &=\text{id}\big|_{ \conf^*_0}\,,\quad \Delta_{A,R} P_0\big|_{ \conf_0} = \text{id}\big|_{ \conf_0} \ , \\
\supp \Delta_{A,R} &\subset \{ (x,y) \in \mathcal M \times \mathcal M \ | \ y \in J^{+,-}(x) \} \ ,
\end{align}
where $J^\pm$ denotes the causal future (past) of $x$, and to the advanced propagator corresponds $J^+$ in the above formula. The \textit{causal propagator} (also called \textit{Pauli-Jordan commutator function}) is defined as the difference
\begin{equation}
    \Delta := \Delta_R - \Delta_A\,.
\end{equation}

The quantum algebra is then constructed as a deformation of the classical BV algebra \cite{Brunetti1999}, in which we introduce a non-commutative $\star-$product in the space of formal power series $\mathscr{BV}[[\hbar]]$. We first define the \textit{microcausal functionals} $\mathscr F_{\mu c} $ as the set of smooth functionals, such that the wave front set of their derivatives does not contain elements $(x_1,..., x_n; k_1,...,k_n) \ , \ k_1 \in T^*_{x_i} \mathcal M \ , \ i= 1,...,n$, where $k_i$ are covectors in the closed forward or backward light-cones. In this space, we define the non-commutative product as
\begin{equation}
A \star B := m \circ e^{\hbar\Gamma_{\Delta_+}}(A \otimes B) \ ,
\end{equation}
where $\Gamma_{\Delta_+}$ is the functional differential operator $$\Gamma_{\Delta_+} =  \int_{x,y} \Delta_+(x,y) \frac{\delta}{\delta \varphi(x)} \otimes \frac{\delta}{\delta \varphi(y)}\,,$$ and $m$ is the pointwise multiplication. 

The non-commutative product is defined in terms of a bidistribution $\Delta_+ : \mathscr E^*   \times \mathscr E^*  \to \mathbb C$, given by $\Delta_+ := \Delta_S + \frac{i}2\Delta$,  where $\Delta_S$ is a symmetric bisolution of $P_0$ chosen in such a way that $\Delta_+$ is of positive type and it satisfies the \textit{microlocal spectrum condition} on its wavefront set \cite{Radzikowski1996}:
\[
\text{WF}(\Delta_+) = \{(x,y;k_x,k_y)\in T^{*}(\mathcal M^2)\setminus \{0\} \ |\ (x,k_x)\sim(y,-k_y), k_x \triangleright 0 \} \ ,
\]
where $(x,k_x)\sim(y,-k_y)$ holds if $x$ and $y$ are joined by a null geodesic $\lambda$, $g^{-1}k_x$ is tangent to $\gamma$ at $x$ and $-k_y$ is the parallel transport of $k_x$ along $\lambda$. $k_x\triangleright 0$ holds if $g^{-1} k_x$ is future pointing.

This condition is equivalent to require that the short-distance singularity structure of the 2-point function has a universal form. For the case of a massive scalar field in even dimensions, for example, the Hadamard 2-point function must take the expression \cite{Brunetti2009}
\begin{equation}
    \Delta_+ = \frac{(-1)^{\frac{d}{2}}}{2(2\pi)^{\frac{d}{2}}} \log\frac{\mu^2}{m^2} m^{d/2-1}\sigma^{\frac{2-d}{4}} I_{d/2 -1}(\sqrt{m^2\sigma}) + w = H + w \ ,
\end{equation}

where $\sigma$ is the squared geodesic distance between $x$ and $y$, $I_d$ is the modified Bessel function of the first kind, $u$ and $v$ are two functions uniquely determined by the geometry of $\mathcal M$ and the equations of motion, $\mu$ is an arbitrary mass scale, and $w$ is the smooth part of $\Delta_+$.
Similar conditions hold in odd dimensions as well as for vector and tensor fields, where the functions $u$ and $v$ are replaced by appropriate tensor structures.

Different choices of $\Delta_S$ produce different representations of the quantum algebra; they are however isomorphic, with the isomorphism given in terms of $e^{\hbar/2 \Gamma_{\Delta_+' - \Delta_+}}$, where $\Delta_+'$ is the two-point function for some different choice of $\Delta_S$. 

Finally, we introduce an involution $*$ in the algebra of microcausal functionals, equipped with the non-commutative $\star-$product, acting by complex conjugation,
\[
F^*(\varphi) := \overline{F(\varphi)} \ .
\]
Together with the involution, the quantum $*-$algebra of off-shell functionals will be denoted as $\mathscr A := (\mathscr{BV}[[\hbar]], *, \star)$.

In order to discuss interactions, we need to introduce a second product in the algebra, the \textit{time-ordered product}. We first restrict our attention to \textit{regular functionals}, that are functionals whose derivatives are smooth sections, and we define the time-ordered product as
\begin{equation}
A \cdot_T B := m \circ e^{\hbar \Gamma_{\Delta_F}} (A \otimes B) \ ,
\end{equation}
with $\Gamma_{\Delta_F} = \int_{x,y} \Delta_F(x,y) \frac{\delta}{\delta \varphi(x)} \otimes \frac{\delta}{\delta \varphi(y)}$, in terms of the \textit{Feynman propagator} $\Delta_F := \Delta_+ + i \Delta_A$.

\begin{remark}[Feynman Hadamard parametrix]\label{rmk:Hadamard-Feynman}
In a Hadamard state, similarly to the 2-point function, the Feynman propagator has a universal, short-distance singularity structure given by the time-ordered, Feynman Hadamard parametrix, so that locally the Feynman propagator takes the form
\[
\Delta_F(x,y) = H_F(x,y) + w(x,y) \ .
\]
\end{remark}
The time-ordered product is equivalent to the pointwise product of a classical functional thanks to the existence of an invertible linear operator $T$, defined by
\begin{equation}
    T(F) := \exp{\frac{\hbar}{2} \int_{x,y} \Delta_F(x,y) \frac{\delta^2}{\delta \varphi(x) \delta \varphi(y)}}F \ .
\end{equation}
Indeed, the time-ordered product can now be written as
\[
F \cdot_T G = T(T^{-1}F \cdot T^{-1}G) \ .
\]
The equivalence holds also at the level of renormalised time-ordered products between local functionals \cite{Brunetti2009, Fredenhagen2013}, as introduced later in Section~2.6. The relevant formula is \eqref{eq:renormalized-time-ordered-product}.

Due to the structure of the wave front set of the Feynman propagator at coinciding points, the time-ordered product cannot be straightforwardly extended to microcausal functionals as the $\star-$product. This is the well-known problem of UV divergences in perturbative QFT. Its solution is provided by \textit{causal renormalization}, which we review in the next section.

Using the time-ordered product, we can define the $S-$matrix
\begin{equation}
S(V) := e^{\frac{i}{\hbar}TV}_T \ ,
\end{equation}
where applying $T$ to $V$ plays a role of the choice of ordering. For defining the interacting fields we use
 the Bogoliubov map
\begin{equation}
R_V(F) := S(V)^{-1} \star \left[ S(V)\cdot_T F \right ] \ .
\end{equation}
Note that for $V=0$, the above operator just reduces to $T$.

The Bogoliubov map plays the role of the intertwining M\o ller operator between the free and interacting algebras, in the sense that interacting observables are represented as formal power series in $\hbar$ and $V$ in the appropriate extension of the free algebra. This is done by means of the Bogoliubov map $F\mapsto R_V(F)$. 

We now extend the definitions above to the antifields. As for the functionals of the physical field configurations, we restrict the attention for now to the \textit{regular multi-vector fields} $\Lambda \mathscr V_{reg}$, that are, the multi-vector fields whose nth-derivatives are all smooth. 

\begin{remark}
If we think of $X \in \mathscr V_{reg}$ as a section, i.e., as 
a map from $\conf $ to $\conf_c $, $T$ acts as a differential operator and it is natural to set
\begin{equation} \label{def:T-product on antifields}
T(X) := \int_x T(X(x)) \frac{\delta}{\delta \varphi(x)} \ .
\end{equation}
\end{remark}

The non-commutative product is extended to vector fields as
\begin{equation}
X \star Y := e^{\hbar \Gamma_{\Delta_+}} (X \wedge Y) \ .
\end{equation}

The BV structures of the classical algebra get translated into structures of the quantum algebra by means of the time-ordered product. The graded algebra of antifields is transformed into $T(\mathscr V_{reg})$, with the \textit{time-ordered antibracket}
\begin{equation}
\{ X , Y \}_T := T \{ T^{-1} X , T^{-1} Y \} \ .
\end{equation}
The equations of motion are mapped, under time-ordering, into the image of the \textit{time-ordered Koszul operator}
\begin{equation}
\delta^T := T \delta T^{-1} = \{ \ \cdot \ , I_0 \}_T \ .
\end{equation}

Since $I_0$ is quadratic in the fields, and $\Delta_F$ is ($i$ times) a Green function for the equations of motion, we have a relation between the time-ordered and the classical free equations of motion 
\begin{equation} \label{eq:classical and time-ordered EOM}
\{ F, I_0 \} = \{ F, I_0 \}_T - i \hbar \bv F \ ,
\end{equation}
where the \textit{BV laplacian} is a nilpotent operator that acts on regular multi-vector fields as a divergence:
\begin{equation}
\bv X := (-1)^{1 + \abs{X}} \int_x \frac{\delta^2 X}{\delta \varphi(x) \delta \varphi^\ddagger(x)} \ .
\end{equation}
The BV laplacian combines well with the antibracket; in particular $\forall P, Q \in \Lambda \mathscr A_{\text{reg}} $ and $\forall X, Y \in T(\Lambda \mathscr A_{\text{reg}} )$, the two following formulas hold:

\begin{align}
\{ P , Q \} &= \bv(PQ) - \bv(P) Q -(-1)^{\abs{X}}P \bv{Q} \ , \ \text{and} \\
\{ X , Y \}_T &= \bv(X \cdot_T Y) - \bv(X) \cdot_T Y -(-1)^{\abs{X}}  X \cdot_T \bv{Y}  \ .
\end{align}
%



Finally, we introduce the \textit{quantum BV operator} as the deformation of the operator $\{ \cdot , I_0 \}$ under the Bogoliubov map
\begin{equation} \label{eq:interacting brst operator}
\hat s := R_V^{-1} \circ \{ \ , I_0 \} \circ R_V \ ,
\end{equation}
which, assuming the \emph{quantum master equation} (QME)
\begin{equation}\label{eq:QME}
\{ S( V(f)), I_0(f) \} = 0 \ , 
\end{equation}
can be rewritten in a more standard form \cite{FR} as
\begin{equation}
\hat s F = \{ F , I(f) \}_T - i \hbar \bv F \,,
\end{equation}
where $f\equiv 1$ on the causal completion of the support of $F$. Note that $f$ can be a multiplet of text funmctions, as discussed before.

The QME is an important condition in BV quantisation and it suffices to establish the on-shell gauge-fixing independence of the $S-$matrix and of the physical observables (the cohomology of $\hat s$). A detailed discussion of the QME and its first derivation in the context of the BV formalism in pAQFT can be found in \cite{Fredenhagen2013}.

Using the relation \eqref{eq:classical and time-ordered EOM}, it was shown in Ref. \cite{Fredenhagen2013} that Eq. \eqref{eq:QME} is equivalent to 
\begin{equation}\label{eq:QME2}
\frac{1}{2}\{L(f),L(f)\} - i\hbar \triangle L(f)=0\,,
\end{equation}
which is the more familiar form of QME. 


\subsection{Renormalization}
The construction presented above for the time-ordered products is well-defined on regular functionals only. The strategy of causal renormalization is to provide an inductive construction for the time-ordered products on the number of factors. The construction of renormalised time-ordered products has been carried out in general globally hyperbolic spacetimes for scalar and Yang-Mills theories \cite{Brunetti2009, HollandsWald2001a, HollandsWald2001b, Hollands2008} using Epstein and Glaser renormalization procedure \cite{EpsteinGlaser1973}. We refer the reader to those works for details; here, we simply recall the main steps. 

The starting point is the axiomatic definition of a family of linear maps
$T_n: \mathscr{F}^{\otimes n}_{\text{loc}} \to {\mathscr{F}}_{\mu c}[[\hbar]]$, with
\[
\supp T_n(F_1,...,F_n) \subset \bigcup \supp F_i \ ,
\]
satisfying a certain set of axioms (see e.g. \cite{Brunetti2009,HollandsWald2001b}). The most important of these axioms is the \textit{causal factorisation property}: whenever there is a Cauchy surface $\Sigma$ such that the functionals $F_1,...,F_k$ are localised in the future of $\Sigma$, and the functionals $F_k,...,F_n$ in its past, we have
\begin{equation}\label{eq:causal}
T_n(F_1,...,F_n) = T_k(F_1,...,F_k) \star T_{n-k}(F_{k+1},...,F_n) \ .
\end{equation}

We set $T_1: \mathscr F_{\text{loc}}^1 \to \mathscr F_{\mu c}[[\hbar]]$ as $T_1 := \exp{\frac{\hbar}{2} \int_{x,y} w(x,y) \frac{\delta^2}{\delta \varphi(x) \delta \varphi(y)}}$, where $w$ is the smooth part of $\Delta_+$. Up to an arbitrary length scale, $T_1$ is uniquely determined by this formula, and using $w$ ensures good covariance properties of $T_1$. Notice that, for Minkowski vacuum, $w= 0$ and $T_1$ coincides with the identity. Using the causal factorisation property, together with the other axioms, and the starting element $T_1$, it is possible to define inductively the maps $T_n$.

%

At order $n$, the map $T_n$ is constructed using the lower order maps $T_k \ , k < n$ and is uniquely fixed up to the renormalization freedom, parameterised by an $n-$linear map
\begin{equation}
Z_n : \mathscr F_{\text{loc}}^{\otimes n} \to \mathscr F_{\text{loc}}[[\hbar]] \ .
\end{equation}
$Z_n$ parametrises the freedom of adding local, locally covariant, finite counterterms in the definition of the normal-ordered observables \cite{Brunetti2009,HollandsWald2001a,HollandsWald2001b}.

The linear maps $T_n$ can now be used to define a renormalised time-ordered product. First, denote with $\mathscr F^{(0)}_{\text{loc}}$ the space of local functionals that vanish on the zero field configuration $\varphi= 0$. It is possible to prove that the pointwise multiplication map $m: S^\bullet \mathscr F^{(0)}_{\text{loc}} \to \mathscr F  $ from the symmetric tensor powers of $\mathscr F^{(0)}_{\text{loc}}$ to the space of multilocal functionals is bijective (\cite{Fredenhagen2013}, Theorem 1). Then,

using the fact that one can uniquely factorise multilocal functionals into local functionals that vanish at zero, we define a map $T_r := \oplus_n T_n\circ m^{-1}$. 

The \textit{renormalised} time-ordered product of elements of $T_r(\mathscr{BV}[[\hbar]])$ is then defined by
\begin{equation}\label{eq:renormalized-time-ordered-product}
F \cdot_{T_r} G:= T_r(T_r^{-1} F\cdot  T_r^{-1} G) \ .
\end{equation}

The renormalized time-ordered product is defined on vectors as in \eqref{def:T-product on antifields}, i.e.:
\[
T_r(X) := \int_x T_r(X(x)) \frac{\delta}{\delta \varphi(x)} \ .
\]
From the renormalized time-ordered products one can define the renormalized $S-$matrix $S_r(V) := e_{T_r}^{\frac{i}{\hbar} T V}$ and the renormalized Bogoliubov map.

The classical BV structure is then translated at the quantum level by the use of the renormalized time-ordering operator $T_r$, and we can define the renormalized time-ordered antibracket and the renormalized time-ordered Koszul map in analogy with their non-renormalized counterparts. The main problem of the extension of the BV formalism to local functionals is the BV Laplacian $\bv$, which gives raise to divergences, if not renormalised.

The relation between the $\star-$antibracket with $I_0$ and the $T_r-$antibracket with $I_0$ is given by the \textit{Master Ward Identity} \cite{Brennecke2008, Duetsch2002, Hollands2008}, which, for any local functional $\tilde V$, reads
\begin{multline}\label{eq:MWI-general}
   \{ S_r(\tilde V) \cdot_{T_r}, I_0 \} = \frac{i}{\hbar} S_r(\tilde V) \cdot_{T_r} \left [ \{I_0 + \tilde V , I_0 + \tilde V\}_{T_r} - i \hbar \An(\tilde V) \right ] \,, 
\end{multline}
and in the case of the interaction $V$ reduces to
\begin{multline} \label{eq:MWI}
\{ S_r(V) \cdot_{T_r}, I_0 \} = \frac{i}{\hbar} S_r(V) \cdot_{T_r} \left [ \{V , I_0 \}_{T_r} +\frac{1}{2}\{V,V\}_{T_r} - i \hbar \An(V) \right ]\\
=
 \frac{i}{\hbar} S_r(V) \cdot_{T_r} \left [\frac{1}{2}(\{I,I\}_{T_r}-\{I_0,I_0\}_{T_r}) - i \hbar \An(V) \right ]
\,,
\end{multline}
where $S_r$ is the renormalised S-matrix. The MWI implicitly defines the \emph{anomaly} $\An$. The anomaly can be constructed from a family of linear maps $A_n : T_r (\mathscr{BV})^{n+1} \to \mathscr A_{\text{loc}}$ as $\An = \sum_{n=0}A_n$. Each linear map $A_n$ contains $n+1$ powers of the interaction $V$, and they can be given explicitly by a recursive formula \cite{Brennecke2008, Rejzner2015}.

The \emph{renormalized BV Laplacian} $\bv_V:\mathscr F_{\text{loc}}[[\hbar]]^{\otimes n} \to \mathscr F_{\text{loc}}[[\hbar]]$ is defined in terms of the anomaly as
\[
\bv_V(X) := \frac{d}{d\lambda} \An(V+\lambda X)\big|_{\lambda=0}\,.
\]


To ensure the gauge independence of the renormalised $S-$matrix, the \textit{Quantum Master Equation} gets modified in the \textit{renormalised} QME
\[
\{S_r(V), I_0 \} = 0 \ .
\]
Equation \eqref{eq:MWI} allows us also to write down the renormalised QME as

\begin{equation}\label{eq:QMEren}
 \{V , I_0 \}_{T_r} +\frac{1}{2}\{V,V\}_{T_r} - i \hbar \An(V)=0\,,
\end{equation}
and the renormalised BV operator as
\begin{equation} \label{eq:interacting brst operator ren}
\hat s := R_V^{-1} \circ \{ \ , I_0 \} \circ R_V= \{ F , I \}_{T_r} - i \hbar \bv_V F \ .
\end{equation}
Using the QME for $I_0$ we can rewrite \eqref{eq:QMEren} in a simpler form that only involvess $I$ and $I_0$:
\begin{equation} \label{eq:rQME0}
\frac{1}{2} \{ I , I \}_{T_r}-\frac{1}{2} \{ I_0 , I_0 \}_{T_r} = i \hbar \An(V) \ . 
\end{equation}
Assuming that also $I_0$ satisfies the classical master equation, we obtain:
\begin{equation} \label{eq:rQME}
\frac{1}{2} \{ I , I \}_{T_r} = i \hbar \An(V) \ .
\end{equation}

In the following, to avoid heavy notation we will denote the renormalized time-ordered products as the non-renormalized ones, $\cdot_T$; whenever we are dealing with local functionals, it will always implicitly assumed that we are using the renormalized quantities. The only cases where a functional is both regular and local are constant and linear functionals. For those, the time-ordering map $T$ is the identity, both for the renormalized and the non-renormalized prescription. The BV Laplacian is trivial for these cases.


\section{Generating functionals} \label{sec:generating functionals}

The starting point to formulate the Euclidean Functional Renormalization Group is usually a regularization of the generating functional \cite{Wetterich1992}, formally defined by a path integral over the field configuration space. One defines
\begin{equation}
\mathcal Z_k = \int \mathcal D \varphi e^{-\frac{1}{\hbar}(I + Q_k)} \ ,
\end{equation}
where $I$ is the gauge-fixed action and $Q_k$ is an IR regulator term, quadratic in the fields, that is usually taken as local in momentum space but non-local in position. Such a regulator acts as momentum-dependent mass term, and in the flow equations regularises both the IR and UV divergences.

It is difficult, however, to formulate a rigorous definition of the path integral on Lorentzian spacetimes. Moreover, the interpretation of $\mathcal Z_k$ as the generating functional of correlation functions seems more problematic for curved spacetimes or in generic states. For these reasons, in \cite{DDPR2022} an approach based on pAQFT has been proposed. Here we follow the same approach, recalling the main definitions. First, we define the source term as a smeared field with smooth functions $j$,
\begin{equation}\label{def:sources}
J(\varphi) := \int_x j(x) \varphi(x) \ .
\end{equation}
Whenever the field configurations $\varphi$ are integrated, we are always also assuming a summation over Lorentz as well as internal (as the gauge group component) indices. In addition to linear sources, in order to understand the renormalization of symmetries we will also need to include effective sources in the generating functional,
\begin{equation} \label{def:BRST sources}
\Sigma := I_{af}(\varphi, \varphi^\ddagger = \sigma) = \eval{\alpha_{\sigma \varphi}(I_{af})}_{\varphi^\ddagger = 0} \,,
\end{equation}
where $\alpha$ is defined in \eqref{eq:alpha def} and $\sigma$ is compactly supported in spacetime.

In the case of theories that are linear in the antifields, as Yang-Mills and gravity, $\Sigma$ can be written as
\[
\Sigma = \int_x \sigma(x) \{ \varphi , I_{af} \} = \int_x \sigma(x) \frac{\delta V}{\delta \varphi^\ddagger(x)} \ .
\]
Moreover, it holds that
\[
V + \Sigma = \alpha_{\sigma \varphi}(V) = V(\varphi, \varphi^\ddagger + \sigma) \ ,
\]
so that $\Sigma$ behaves as a gauge-fixed interaction $V$, in which the antifields $\varphi^\ddagger$ are substituted by effective BRST sources $\sigma$.
\begin{lemma}\label{lemma:sigma}
In the unrenormalised theory (or, equivalently, in absence of anomalies), $\Sigma$ is invariant under $\{ \ \cdot \ ,V\}_T$.
\end{lemma}
\begin{proof}
We prove this by direct computation. Since $\Sigma = \int_x \sigma(x) s \varphi(x)$, we have
\begin{equation*}
\{ V, \Sigma \}_T = \{ I , \Sigma \}_T = \hat s \Sigma + i \hbar \bv \Sigma =
\int_x \sigma(x) \hat s^2 \varphi(x) - i \hbar \bv(\Sigma) = 0 \ .
\end{equation*}
We used the definition of $\hat s$ in the first line, and the nilpotency of $\hat s$ together with the fact that $\Sigma$ is independent on the antifields to conclude.

\end{proof}

Finally, we need to insert a regulator term in the generating functional. Although the regulator term is usually taken to be local in momentum, this is unsuitable for theories on curved space, where a Fourier transform might not even be possible without spoiling the covariance of the theory. Moreover, a regulator non-local in position might spoil the unitarity of the $S-$matrix, and introduce artificial poles in the propagator. Therefore, we choose to introduce a local regulator,
\begin{equation}
Q_k := - \frac{1}{2} \int_x q_k(x)  \varphi^2(x) \ ,
\end{equation}
where $q_k$ is compactly supported and smooth as a spacetime function.

The use of a local regulator has been discussed in the literature, and has found recent applications in e.g. \cite{Fehre2021}. A local regulator does not implement a Wilsonian renormalization, in the sense that it does not imply a successive integration of momentum shells, but rather it implements a flow in the space of massive theories. Moreover, being simply a mass term, the use of a local regulator needs an additional UV renormalization in the flow equation. In our formalism, such renormalization is implemented by a Hadamard subtraction of divergences \cite{DDPR2022}. For different approaches to Lorentzian fRG equations, one can also see \cite{Banerjee2022, Manrique2011}.

\begin{example}[Yang-Mills theories]
In the case of Yang-Mills theories, we can write explicitly the new contributions $\Sigma$ and $Q_k$. $\Sigma$ by definition is
\[
\Sigma = I_{af}(\varphi^\ddagger = \sigma) = \int_x D_\A c \sigma_\A - \frac{i\lambda_{YM}}{2} [c , c] \sigma_c \ ,
\]
while $Q_k$ is
\[
Q_k = - \frac{1}{2} \int_x q_k^A(x) T \abs{A}^2(x) - q_k^c(x) T( \bar c c - c \bar c) \ .
\]
\end{example}

Having introduced all ingredients, we can now define the \textit{regularised generating functional},
	\begin{equation} \label{def:Z_k}
	Z_k(\varphi, \varphi^\ddagger ; j, \sigma ) := S(V)^{-1} \star S(V + \Sigma +Q_k + J) = R_V (S( Q_k + J + \Sigma)) \ .
	\end{equation}

The regularised generating functional $Z_k$, and, similarly, $W_k$ and the effective average action, defined below, are functionals mapping an element of the dual of the extended field configuration space and valued in the formal power series of the free algebra, $Z_k: \overline{\mathscr E}'  \to \mathscr{BV}[[\lambda, \hbar]]$. This algebraic definition differs from the previous definition of generating functionals given in the case of scalar theories \cite{DDPR2022}, where we evaluated $Z_k$ in a concrete state, so it was a map from $\overline{\mathscr E}'$ to $\mathbb{C}$. Now $Z_k$ is a map between two locally convex topological vector spaces, so one needs to invoke a more general concept of differentiability. We will again use the definition of Bastiani (Definition~\ref{df:Bastiani}, but with $\mathscr{BV}[[\lambda, \hbar]]$ as a  target). The topology on $\mathscr{BV}[[\lambda, \hbar]]$ is induced from having the H\"ormander topology at each fixed order of $\lambda$ and $\hbar$. Differentiability of $Z_k$ with respect to $j$ follows from the fact, can be demonstrated explicitly, since
    \[
    Z_k(\varphi, \varphi^\ddagger ; j+t\psi, \sigma )=S(V)^{-1} \star \left(S\left(V + \Sigma +Q_k + J\right)\cdot_T S\left(\int_x \psi(x)\ph(x)\right)\right)\,.
    \]
and hence
\begin{multline*}
\frac{1}{t}( Z_k(\varphi, \varphi^\ddagger ; j+t\psi, \sigma )- Z_k(\varphi, \varphi^\ddagger ; j, \sigma ))
\\=S(V)^{-1} \star \left(S(V + \Sigma +Q_k + J)\cdot_T \frac{1}{t}\left(S\left(\int_x \psi(x)\ph(x)\right)-1\right)\right)\,.
\end{multline*}
The limit $t\rightarrow 0$ exists and is given by
\[
S(V)^{-1} \star \left(S(V + \Sigma +Q_k + J)\cdot_T \frac{1}{t}\left(\int_x \psi(x)\ph(x)\right)\right)\,,
\]
which is an element of $\mathscr{BV}[[\lambda, \hbar]]$, as required. Note that this expression can be seen as a vector valued distribution (in this case valued in $\mathscr{BV}[[\lambda, \hbar]]$), in the sense of Schwarz \cite{Schwarz1,Schwarz2}.

One can iterate this process to obtain higher derivatives as well. The key point is that one always ends up differentiating an exponential.

    The new definition highlights the algebraic properties of the generating functionals, stressing that the structural form of the RG flow equation does not depend on the choice of a state. The physical interpretation of $Z_k$ as the generating functional of interacting correlation functions for a given state is recovered, once $Z_k$ is evaluated on a fixed Hadamard state for the free theory. The state evaluation of the algebraic definition coincides with the previous definition for the generating functional, given in Ref. \cite{DDPR2022}. 

    Finally, notice that, since the state for the free theory does not depend on the RG scale $k$, the state evaluation and the scale derivative with respect to $k$, giving raise to the RG flow equation, commute. 
 
The defining property of $Z_k$ is that its functional derivatives, at vanishing sources and for vanishing regulator, give the time-ordered correlation functions
\begin{equation}
\eval{\frac{\delta^n Z_k}{\delta j(x_1)... \delta j(x_n)}}_{j = \sigma = k = 0} = R_V \left( \varphi(x_1) \cdot_T... \cdot_T \varphi(x_n) \right) \ .
\end{equation}

To justify this interpretation, we recall that, on regular functionals, we can introduce an interacting star product as \cite{Fredenhagen2011}
\begin{equation}
F \star_V G = R_V^{-1} \left ( R_V(F) \star R_V(G) \right ) \ ,
\end{equation}
as well as an interacting state built from a state $\omega$ of the free theory as $\omega_V = \omega \circ R_V$. The correlation functions of interacting observables in the interacting state are given by
\begin{equation}
\omega_V(\varphi(x_1) \star_V ... \star_V \varphi(x_n) ) = \omega ( R_V(\varphi(x_1) \star ... \star R_V(\varphi(x_n)) ) \ .
\end{equation}

Finally, we recall that, if the support of an observable $F$ does not intersect the past of the support of $G$ ($F \gtrsim G$), we have \cite{Drago2015, Lindner2013}
\begin{equation}
R_V(F \cdot_T G) = R_V(F) \star R_V(G) \ .
\end{equation}
Therefore, the functional derivatives of $Z_k$ are the time-ordering of the interacting correlation functions in the interacting state. The regulator term introduces a $k-$dependent mass in the propagators via the Principle of Perturbative Agreement \cite{DDPR2022, Drago2015, HollandsWald2004, Zahn2015}.

From the definition of $Z_k$, we follow the usual definitions of the other generating functionals \cite{Weinberg-vol2}: first, we define the \textit{regularised generator of connected correlation functions} $W_k$ as
	\begin{equation} \label{def:W_k}
	e^{\frac{i}{\hbar} W_k(\varphi, \varphi^\ddagger; j, \sigma)} := Z_k \ .
	\end{equation}

We introduce the \textit{mean value operator} as a bit of useful notation,
	\begin{equation} \label{def:mean value}
	\langle F \rangle = e^{-\frac{i}{\hbar}W_k} (S(V)^{-1} \star \left [ S(V+Q_k + J + \Sigma) \cdot_T F \right ]) \ .
	\end{equation}

The mean value operator maps an element of the free algebra in the formal power series of the interacting algebra, with coefficients that depend on the sources and the mean antifields $j, \ \sigma$ as well as on the RG scale $k$. Notice that the mean value operator does not involve state evaluation. The mean value operator takes only one input, and no confusion should arise with the standard pairing between distributions $\langle \ \cdot \ , \ \cdot \ \rangle$.
 
Functional derivatives of $W_k$ give the connected correlation functions. Its first derivative defines the \textit{effective field} $\phi$:
\begin{equation}\label{def:phi}
\phi_j := \frac{\delta W_k}{\delta j} \ ,
\end{equation}
while the second derivative gives the \textit{connected interacting propagator}
\begin{equation} \label{second derivative W_k}
   - i \hbar \frac{\delta^2 W_k}{\delta j(x) \delta j(y)} = \langle \varphi(x) \cdot_T \varphi(y) \rangle - \phi(x) \phi(y) \ .
\end{equation}

The relation between $\phi$ and $j$ can be perturbatively inverted \cite{DDPR2022, Ziebell2021}, giving $j$ as a function of $\phi$; we define $\tilde \Gamma_k$ as the Legendre transform of $W_k$,
	\begin{equation} \label{def:tilde Gamma}
	\tilde{\Gamma}_k = W_k(j_\phi) - J_\phi(\phi) \ .
	\end{equation}
Finally, we define the \textit{effective average action} as a modified Legendre transform of $W_k$:
    \begin{equation} \label{def:Gamma_k}
	\Gamma_k(\varphi, \varphi^\ddagger ; \phi, \sigma) = W_k(j_\phi) - J_\phi(\phi) - Q_k(\phi) \ .
	\end{equation}
$\Gamma_k$ will be the main object of our discussion, as it is the central tool in the functional Renormalization Group \cite{Dupuis2021}. Its first and second derivatives satisfy two important properties, which are easily proved by direct inspection. The first derivative gives
\begin{equation} \label{eq:qeom}
\frac{\delta }{\delta \phi} ( \Gamma_k + Q_k(\phi) ) = - j_\phi \ ,
\end{equation}
while the second derivative is the inverse of the full propagator $\frac{\delta^2W_k}{\delta j^2}$:
\begin{equation} \label{eq: relation Gamma and its inverse G}
\int_z \frac{\delta^2 (\Gamma_k + Q_k)}{\delta \phi(x) \phi(z)} \frac{\delta^2 W_k}{\delta j(z) \delta j(y)} = - \delta(x,y) \ .
\end{equation}

The above equations can be considered as the "quantum" version of the equations of motion and of the definition of the propagator as the Green's function for the wave operator, respectively. 

From \eqref{eq:qeom}, it is possible to derive \cite{DDPR2022}
\begin{equation}
    \frac{\delta \Gamma_k}{\delta \phi} = \frac{\delta I_0}{\delta \phi} + \langle V^{(1)} + \Sigma^{(1)} \rangle \ .
\end{equation}
In the classical limit $\hbar \to 0$, the non-commutative and time-ordered products reduce to the point-wise product. The Bogoliubov map in turn reduces to the classical M\o ller operator, which acts on functionals of the field by pull-back, so that $\langle F(\varphi) \rangle = F(\phi) + \mathcal O(\hbar)$. Therefore we obtain, modulo an irrelevant constant,
\begin{equation} \label{eq:classical limit Gamma}
    \Gamma_k(\varphi^\ddagger; \phi, \sigma) = I(\phi, \varphi^\ddagger) + \Sigma(\phi, \sigma) + \mathcal O(\hbar) \ .
\end{equation}

Compared with our previous work \cite{DDPR2022}, here we do not need to state evaluate the generating functionals. We see that our definitions, as well as the flow equations, hold at the pure algebraic level. The generating functionals $Z_k$, $W_k$, and $\Gamma_k$ are interpreted as functionals of $(j,\sigma)$ or of $(\phi, \sigma)$, parametrised by the field and antifield configurations $(\varphi, \varphi^\ddagger)$. In order to extract physical predictions, one can evaluate the functionals in some state $\omega$. The state evaluation also provides the connection between the definitions given here and those in our previous paper.

However, the state dependence still enters the flow equation through the choice of a Green's function for the quantum wave operator $\frac{\delta^2}{\delta \phi^2} (\Gamma_k + Q_k)$. In fact, the main difference from the euclidean case is that, in globally hyperbolic spacetimes, the Green's function for a wave-type operator is not unique.

The Green's function $G_k$ for $\frac{\delta^2}{\delta \phi^2} (\Gamma_k + Q_k)$ can be constructed as a Dyson-type series from the free Feynman propagator $\Delta_F$.
In turn, the Feynman propagator depends on the arbitrary choice of the symmetric part of the distribution $\Delta_+$. Such a choice can be connected to the choice of the state two-point function. Indeed, for the state given by $\omega_0(F):= F(0)$, we have
\[
\omega_0(\varphi(x) \star \varphi(y) ) = \Delta_+(x,y) \ .
\]
	
\section{Flow equations} \label{sec:flow-eqs}

The flow equations for the generating functionals, and in particular for the effective average action, can be directly computed by taking the $k-$derivative of each definition, just as in the scalar case \cite{DDPR2022}. We briefly recall the steps, starting from the flow for $Z_k$, which by direct inspection is
\begin{equation}
\partial_k Z_k(\varphi, \varphi^\ddagger ; j, \sigma) = \frac{i}{\hbar} S(V)^{-1} \star \left [ S(V+Q_k+J + \Sigma) \cdot_T \partial_k Q_k \right ] \ .
\end{equation}
From the definition of $W_k$ \eqref{def:W_k}, we get
\begin{equation}
\partial_k W_k = e^{- \frac{i}{\hbar} W_k} (S(V)^{-1} \star \left [ S(V+Q_k+J + \Sigma) \cdot_T \partial_k Q_k \right ]) \ .
\end{equation}
The above can be rewritten in terms of the second derivative of $W_k$, by first expressing the r.h.s as
\begin{equation}
\partial_k W_k = - \frac{1}{2} \int_x \partial_k q_k(x) \langle T \varphi^2 \rangle
\end{equation}

Since the second derivative of $W_k$ gives the connected part of the full propagator \eqref{second derivative W_k}, we can rewrite the flow equation for $W_k$ in the form of a local \textit{Polchinski equation},
\begin{equation}
\partial_k W_k = \frac{1}{2} \int_x \partial_k q_k \left ( i \hbar : W^{(2)}_k :_{\tilde H_F} - W_k^{(1)}(x) W_k^{(1)}(x) \right ) \ ,
\end{equation}
where the normal-ordering is defined as
\begin{equation}
: W^{(2)}_k:_{\tilde H_F} = \lim_{y \to x} W^{(2)}_k(x,y) - \tilde H_F (x,y) \ .
\end{equation}

The counterterm $\tilde H_F$ is implicitly defined by the relation
\begin{equation}
\lim_{y \to x} \langle \varphi(x) \cdot_T \varphi(y) \rangle - \tilde H_F(x,y) 
:= \langle \lim_{y \to x} \varphi(x) \cdot_T \varphi(y) - H_F(x,y) \rangle \ ,
\end{equation}
where $H_F$ is the Feynman Hadamard parametrix defined in Remark \ref{rmk:Hadamard-Feynman}. $\tilde H_F$ can be computed by a Hadamard expansion of the relation \eqref{eq: relation Gamma and its inverse G}.

Finally, since
\begin{equation}
\partial_k \Gamma_k = \partial_k W_k - \partial_k Q_k(\phi) = \partial_k W_k + \frac{1}{2} \int_x \partial_k q_k \phi^2(x) \ ,
\end{equation}
and recalling that $\phi = W_k^{(1)}$ by definition (Eq. \eqref{def:phi}),
we arrive at the RG flow equation for gauge theories, which takes the form of a local \textit{Wetterich equation} (also known as \textit{functional Callan-Symanzik equation} \cite{Alexandre2001}), generalised to the algebraic level and for Lorentzian spacetimes:
\begin{equation} \label{eq:wetterich}
\partial_k \Gamma_k = \frac{i \hbar}{2} \int_x \partial_k q_k(x) G_k(x,x) \ ,
\end{equation}
together with the condition
\begin{equation} 
G_k(x,x) = : W^{(2)}_k(x,x) :_{\tilde H_F} = - : \left ( \Gamma^{(2)}_k - q_k \right )^{-1} :_{\tilde H_F}\,.
\end{equation}

\section{Symmetries} \label{sec:symmetries}
\subsection{Modified Slavnov-Taylor identities}
\label{sec:mSTI}

The QME governs gauge dependence of the action and its renormalization \cite{Fredenhagen2013}. Now, we would like to understand what are the symmetry constraints to be imposed on the effective average action, to be consistent with BRST invariance. The identity satisfied by the effective average action is usually called in the fRG context \textit{modified Slavnov-Taylor identity}. In the pQFT approach based on a path integral formulation, the Slavnov-Taylor identity is derived from the gauge invariance and parametrization invariance of the generating functional $Z$. In the absence of a regulator term (or more precisely, assuming an implicit BRST invariant regularization), the Slavnov-Taylor identity assumes the form of the \textit{Zinn-Justin equation} (\cite{Zinn-Justin1975, Zinn-Justin2002} )
\begin{equation}
\int_x \frac{\delta \Gamma_0}{\delta \phi(x)} \frac{\delta \Gamma_0}{\delta \sigma(x)} = 0 \ .
\end{equation}

The Zinn-Justin equation is usually interpreted as a symmetry constraint for $\Gamma_0$ and it plays a crucial role in the perturbative renormalization of gauge theories. In the presence of a non-invariant regulator term, the Zinn-Justin equation is modified by a symmetry breaking term that, in the context of fRG, is written as \cite{Ellwanger1994}
\begin{equation} \label{eq:standard mSTI}
\int_x \frac{\delta \Gamma_k}{\delta \phi(x)} \frac{\delta \Gamma_k}{\delta \sigma(x)} = i \hbar \int_{x,y} q_k(x) G_k(x,y) \frac{\delta^2 \Gamma_k}{\delta \sigma(x) \delta \phi(y)} \ .
\end{equation}
A detailed discussion of the mSTI can be found e.g. in \cite{Pawlowski2005}.

Since the regulator term is usually non-local, $q_k = q_k(x,y)$, the r.h.s of the above equation is usually written as a trace over a loop contribution. The equation above is called \textit{modified Slavnov-Taylor identity} and is interpreted as the breaking of the symmetries of the effective action by the regulator dependent term. In the $k \to 0$ limit, the mSTI reduces to the Zinn-Justin equation. Therefore, if the effective average action satisfies the mSTI, this ensures that the effective action $\Gamma_0 = \lim_{k \to 0} \Gamma_k$ satisfies the usual Zinn-Justin equation, and in this sense the symmetries of the effective action are restored. 

Since the mSTI are derived by the same functional as the flow equations, an exact solution to the Wetterich equation also satisfies the mSTI, governing how the symmetries are broken and restored along the flow. 

Although conceptually satisfying, for practical computations one needs to rely on some kind of approximation to solve the flow equation, usually in the form of a truncation in the parameter space. In this case the mSTI becomes a non-trivial constraint on the truncation, and solving the flow equation with an approximated effective average action that also satisfies the mSTI can become a difficult task. In fact, a standard strategy \cite{Reuter1996} is to find a suitable truncation scheme for the flow equation, which in general does not satisfy the mSTI, and measure the quality of the approximation by the order of magnitude of the mSTI breaking term.

The loop structure of the regulator term on the r.h.s lets one argue that it is a higher order contribution, and, in a first approximation, it can be neglected \cite{Reuter1996, ReuterWetterich1994}. However, since the regulator-dependent term is proportional to a second-order derivative of the effective average action, one cannot use cohomological methods \cite{Barnich2000} to discuss the renormalization of the effective action. For these reasons, many alternatives have been pursued in the literature, in particular in constructing manifestly gauge invariant flows. However, these attempts usually come at the price of rather involved computations. 

In the following section, we will modify the generating functional $Z_k$ in order to obtain the standard Slavnov-Taylor identities, introducing an additional field $\eta$.
Now, we briefly show how to derive the mSTI in our context, following the same steps as in the path integral approach.  We derive the results in the unrenormalised case, for simplicity of comparison with the literature.

We start by computing the action of the free  Koszul-Tate differential on $Z_k$
\begin{equation}
\delta_0 Z_k = S(V)^{-1} \star \left [ \delta_0 S(V+ J + Q_k + \Sigma) \right] \ .
\end{equation}
Now, we use the AMWI \eqref{eq:MWI} (where, in the unrenormalised case, the anomaly simply reduces to the BV laplacian) to conclude that
\begin{multline*}
 \delta_0 S(V+ J + Q_k + \Sigma)\\=\frac{i}{\hbar} S(V+ J + Q_k + \Sigma)\cdot_T (\frac{1}{2}\{I+ J + Q_k + \Sigma,I+ J + Q_k + \Sigma\}_T-i\hbar A(V+ J + Q_k + \Sigma))\\
 =\frac{i}{\hbar} S(V+ J + Q_k + \Sigma)\cdot_T (\{J + Q_k + \Sigma,V\}_T+\frac{1}{2}\{I,I\}-i\hbar A(V+ J + Q_k + \Sigma)) \ .
\end{multline*}
We observe that $\{Q_k+ J + \Sigma, I_0 \} = 0$, since $J+Q_k + \Sigma$ does not contain antifields and the QME \eqref{eq:QME} implies that  $\frac{1}{2}\{I,I\}=i\hbar A(V)$, so we can write
\begin{equation} \label{eq:intermediate s_0Z_k}
\delta_0 Z_k =
\frac{i}{\hbar} S(V)^{-1} \star \left(S(V+Q_k + J + \Sigma )\cdot_T\left(\{Q_k + J + \Sigma , V \}_T - i \hbar A_V(J+Q_k+\Sigma)\right)\right)\,,
\end{equation}
where we denoted $A_V(J+Q_k+\Sigma)\equiv A(V+J+Q_k+\Sigma)-A(V)$.

Thanks to Lemma \ref{lemma:sigma}, $\Sigma$ does not contribute to the antibracket with $V$ in the above equation, so we obtain
\begin{equation}
\delta_0 Z_k = \frac{i}{\hbar} S(V)^{-1} \star \left [ S(V+Q_k + J + \Sigma) \cdot_T (\{ Q_k + J , V \}_T - i \hbar  A_V(J+Q_k+\Sigma))\right ]
\end{equation}

Now, the mean antifield term $\Sigma$ is introduced in such a way that
\begin{align*}
\int_x  S(V+Q_k + J + \Sigma) \cdot_T \frac{\delta V }{\delta \varphi^\ddagger(x)} &= \int_x  S(V+Q_k + J + \Sigma) \cdot_T \frac{\delta \Sigma }{\delta \sigma(x)} \\
&= - i \hbar \int_x \frac{\delta }{\delta \sigma(x)} S(V+Q_k + J + \Sigma) \ .
\end{align*}
Moreover, $J$ is linear in the antifields, so that
\begin{align*}
S(V+Q_k + J + \Sigma) \cdot_T \{ J , V \}_T &= \int_x j(x) S(V+Q_k + J + \Sigma) \cdot_T \frac{\delta V }{\delta \varphi^\ddagger(x)} \\
&= - i \hbar\int_x j(x)\frac{\delta }{\delta \sigma(x)} S(V+Q_k + J + \Sigma) \ .
\end{align*}
Finally, since $Q_k$ does not depend on $\varphi^\ddagger$ and $ \sigma$, we have
\begin{align*}
S(V+Q_k + J + \Sigma) \cdot_T \{ Q_k , V \}_T &= \int_x  S(V+Q_k + J) \cdot_T \frac{\delta V }{\delta \varphi^\ddagger(x)} \cdot_T \frac{\delta Q_k }{\delta \varphi(x)}\\
&= - i \hbar \int_x \frac{\delta}{\delta \sigma(x)} \left[ S(V+ Q_k + J + \Sigma) \cdot_T \frac{\delta Q_k}{\delta \varphi(x)} \right] \ .
\end{align*}
Using these results, we can write the action of $\delta_0$ on $Z_k$ as
\begin{multline}
\delta_0 Z_k =  \int_x j(x) \frac{\delta Z_k}{\delta \sigma(x)}\\ + \frac{\delta}{\delta \sigma(x)} \left [S(V)^{-1} \star \left ( S(V+Q_k + J + \Sigma ) \cdot_T \left(\frac{\delta Q_k}{\delta \varphi(x)}\right) \right )\right ] \\
- i \hbar  S(V)^{-1} \star \left ( S(V+Q_k + J + \Sigma ) \cdot_T A_V(J+Q_k+\Sigma) \right )  \ .
\end{multline}

The first term on the right-hand side is the standard term appearing in the Ward identity for the gauge symmetry of the generating functional $Z(j)$, the second term is the contribution from the regulator and the third term is the anomaly. 


Substituting first the definition of $W_k$ \eqref{def:W_k}, then of the effective average action \eqref{def:Gamma_k}, and recalling the identities
\begin{equation}
j - q_k \frac{\delta W_k}{\delta j_\phi} = - \frac{\delta \Gamma_k}{\delta \phi} \quad , \quad \frac{\delta^2 W_k}{\delta \sigma(x) \delta j(y)} = \int_z G_k(x,z) \frac{\delta^2 \Gamma_k}{\delta \phi(z) \delta \sigma(x)} \ ,
\end{equation}

we get the modified Slavnov-Taylor identity:
\begin{equation} \label{eq:mSTI}
\delta_0 \Gamma_k + \int_x \frac{\delta \Gamma_k}{\delta \phi(x)} \frac{\delta \Gamma_k}{\delta \sigma(x)} = i \hbar \int_{x,y}q_k(x) G_k(x,y) \frac{\delta^2 \Gamma_k}{\delta \phi(x) \delta \sigma(y)} + \langle A_V(J+Q_k + \Sigma) \rangle \ .
\end{equation}
The two differences from the standard form \eqref{eq:standard mSTI} are the presence of the Koszul variation of $\Gamma_k$, which arises because here we are considering off-shell functionals and the presence of the anomaly term. Whenever evaluated on a faithful state and in the absence of anomalies, the mSTI \eqref{eq:mSTI} reduces to the one known in the literature \eqref{eq:standard mSTI}.

\subsection{Effective master equation for the effective average action}

As discussed, the presence in the mSTI \eqref{eq:mSTI} of a second-order derivative of the effective average action prevents one from using cohomological methods to determine the structural form of $\Gamma_k$. We now derive a different symmetry identity, closely related to the mSTI, which lets us discuss the regulator term in cohomology. The idea is an adaptation of the treatment given in \cite{Zinn-Justin2002} of symmetries broken by quadratic terms. The key insight is in recognising that the regulator term is "half of a contractible pair": i.e., we can enlarge the configuration space by adding a source for $\{ V , \varphi^2(x) \}$ in the generating functional $Z_k$. We first discuss the case of $V$ linear in the antifields; we will comment later on how to proceed for a theory with $V$ at order $l$ in the antifields.
\begin{definition} \label{def:H}
We extend the space of fields $\varphi$ with a new compactly supported field $\eta$, and we define
\begin{equation} \label{eq:def H}
H := \frac{1}{2} \int_x \eta(x) \{ V , \varphi^2(x) \} \ ,
\end{equation}
together with an extended BV differential
\[
\mBV := \hat s + \int_x q_k(x) \frac{\delta}{\delta \eta} - \int_x j(x) \frac{\delta}{\delta \sigma(x)} \ . 
\]
\end{definition}

The variation of $\eta$ under $\mBV$ given by
\begin{equation}
\mBV \eta(x) = q_k(x) \ .
\end{equation}

Since $\mBV q_k = 0$, the pair $(\eta, q_k)$ forms a contractible pair in the cohomology of the extended BV differential $\mBV$.

The pair $(\eta, q_k)$ can be understood as an enlargement of the non-minimal sector. In fact, defining
\begin{equation}\label{eq:Xi}
\Xi := - \frac{1}{2} \int_x \eta(x) \varphi^2(x) \ ,
\end{equation}
a short computation shows that
\begin{equation}\label{eq:Xi-Q-H}
\mBV \Xi = H + Q_k \ .
\end{equation}

In the same way, $j$ and $\sigma$ form a contractible pair, and the source sector $J + \Sigma$ can be rewritten as
\begin{equation}\label{eq:source-bv-exact}
J+\Sigma = s_k\int_x \sigma(x) \varphi(x) \ .
\end{equation}

It follows that the source and regulator terms are $s_k-$exact:
\begin{equation}\label{eq:source-regulator-exact}
    J + \Sigma + Q_k + H = s_k \int_x \left [ \sigma(x) \varphi(x) - \frac{1}{2} \eta(x) \varphi^2(x) \right ]\ .
\end{equation}

\begin{example}[Yang-Mills theories]
In the case of Yang-Mills theories, $\Xi$ is
\[
\Xi = - \frac{1}{2} \int_x \eta ( \abs {\A}^2 + \bar c c ) \ , 
\]
and so $H$ is
\[
H = \int_x \eta(x) \left [  \A D_\A c + \frac{i}{2} ( b c + \frac{ \lambda_{YM}}{2} c [c,c] ) \right ] \ .
\]
\end{example}

From its action on the fields, one can check that $\mBV^2=0$, so $\mBV$ is a differential. 
In fact, first notice that $\mBV$ is defined as the sum of three, mutually commuting terms, since they each act on different fields: $[\hat s, \int_x q_k(x) \frac{\delta}{\delta \eta} + \int_x j(x) \frac{\delta}{\delta \sigma(x)}]= [\int_x q_k(x) \frac{\delta}{\delta \eta},\int_x j(x) \frac{\delta}{\delta \sigma(x)} ] = 0$. ${\hat s}^2=0$ thanks to rQME. Moreover, $\int_x q_k \frac{\delta}{\delta \eta}$ is nilpotent, since $\eta$ and $q_k$ are a contractible pair, so that the action of $\int_x q_k \frac{\delta}{\delta \eta}$ on any functional of the field $\eta$ (notice that any functional can be at most linear in $\eta$, since it is a fermionic field) produces a term which is proportional to $q_k$, and $\mBV q_k = 0$ by definition. Finally, $\int_x j(x) \frac{\delta}{\delta \sigma(x)}$ is nilpotent because $j$ and $\sigma$ have opposite parity: if $\sigma$ is fermionic, than any functional $F$ can be at most linear in $\sigma$, and $\int_{x,y} j(x) j(y) \frac{\delta^2 F}{\delta \sigma(x)\sigma(y)}=0$. On the other hand, if $\sigma$ is bosonic, $j(x)$ is fermionic, and $j^2 = 0$.

Thanks to the new term $H$, the BV symmetry of the original action $I$ can now be extended to a larger symmetry, encoded in the operator $\mBV$, for $I_0 + V + J + \Sigma + Q_k + H$:
\begin{equation} \label{eq:classical extended BV invariance}
\mBV (I_0 + V + J + \Sigma + Q_k + H) = 0 \ . 
\end{equation}
The action of $\mBV$ on $I_0 + V$ is simply the action of the BV differential $\hat s$, and as such it vanishes because of the QME. 
Moreover
$$\mBV(J+\Sigma + Q_k + H) = \mBV^2 \left ( \int_x \sigma(x) \varphi(x) + \Xi \right ) = 0$$
because $\mBV$ is a differential.

The inclusion of $H$, and the replacement of the BV differential with a scale-dependent operator, extends the BV invariance of the original action to a larger symmetry for the classical action $I + J + \Sigma + Q_k + H$. Now, we want to discuss the consequences of the extended classical symmetry \eqref{eq:classical extended BV invariance} to the quantum correlation functions, by analysing its consequences on the effective average action.

The additional terms $\Sigma + H$ can also be understood as a generalised gauge-fixing term: in fact, introducing the functional
\[
\Theta := \int_x \sigma(x) \varphi(x) + \frac{1}{2} \eta(x) \varphi^2(x) \,,
\]
we obtain
\[
\alpha_\Theta(V) = V\left( \varphi^\ddagger + \frac{\delta \Theta}{\delta \varphi} \right) = V \left (\varphi^\ddagger + \sigma + \eta \varphi \right ) \ .
\]

Since the rQME states the invariance of the action $I$ under gauge transformations, it holds for any gauge fixed interaction functional; in particular, it holds that \cite{FR}
\begin{equation} \label{eq:gauge-fixed-QME}
\frac{1}{2} \{ \alpha_\Theta(V) + I_0 , \alpha_\Theta(V) + I_0 \}_T - i \hbar \An( \alpha_\Theta(V) ) = 0
\ .
\end{equation}

For Yang-Mills-type theories, the rQME with generalised gauge-fixing reduces to
\[
\frac{1}{2} \{ I + \Sigma + H, I + \Sigma + H \}_T - i \hbar \An( V + \Sigma + H ) = 0
\ ,
\]
and subtracting the rQME itself, Eq. \eqref{eq:rQME}, we get
\[
 \{I, \Sigma + H \}_T + \frac{1}{2}\{ \Sigma + H , \Sigma + H \}_T - i \hbar \left ( A(V + \Sigma + H) - A(V) \right ) = 0 \ .
\]
Since $\Sigma + H$ and $I_0$ do not contain antifields, the above equation reduces to
\begin{equation}\label{eq:sigma-H-simplification}
    \{V, \Sigma + H \}_T  - i \hbar \left ( A(V + \Sigma + H) - A(V) \right ) = 0 \ .
\end{equation}

From $\alpha_\Theta(V)$ we define the new, $\sigma-$ and $\eta-$dependent regularising generating functional $Z_k$ as
\begin{equation}
Z_k := S(V)^{-1} \star S(\alpha_\Theta (V) + Q_k + J) \ .
\end{equation}
Notice that the Definition of the $Z_k$ functional given above is valid for both Yang-Mills-type theories, as Yang-Mills and gravity, which are linear in the antifields, and for more general theories with additional, $\varphi^\ddagger-$non-linear contributions.

From this re-definition of $Z_k$, we define $W_k$ and $\Gamma_k$, as in \eqref{def:W_k} and \eqref{def:Gamma_k}. Since $\eta$ is a classical field, we have $\frac{\delta W_k}{\delta j_\eta} = \eta$ and $\frac{\delta \tilde \Gamma_k}{\delta \eta} = - j_\eta$.

In the case of Yang-Mills-type theories, linear in the antifields, $\alpha_\Theta(V)$ reduces to $V + \Sigma + H$, and the regularising generating functional becomes
\begin{equation} \label{def:extended-Zk}
Z_k(\varphi, \varphi^\ddagger ; \sigma, \eta, j) = R_V \left [S(J + Q_k + \Sigma + H) \right ] \ .
\end{equation}


With $H$, we can derive the symmetry constraint on $\Gamma_k$. This identity can be directly derived from the QME, and as such it can be regarded as the translation of the gauge independence of physical observables on the level of the effective average action. We derive it first in the case of Yang-Mills-type theories for simplicity, and we later generalised the result to the case of any gauge theory.

The identity is a main result of this paper, and is summarised in the following theorem.

Before stating and proving the theorem, we need to prove a preliminary Lemma.

\begin{lemma}\label{lemma:anomaly}
        The following relation holds in the renormalised theory, for any local functional $F$ that does not contain antifields and any interaction $\tilde{V}$:
        \begin{multline} \label{eq:renormalized step}
    \{ S(\tilde{V} + F) , I_0 \} 
    - S(F) \cdot_T \{ S(\tilde{V}) , I_0 \} = \frac{i}{\hbar} S(\tilde{V} + F) \cdot_T \left [ \{\tilde{V}, F \}_T - i \hbar A_{\tilde{V}}(F) \right ] \ ,
\end{multline}
where $A_{\tilde{V}}(F) := \An(\tilde{V}+F)- \An(\tilde{V})$.
    \end{lemma}
    \begin{proof}
      We apply the master Ward identity \eqref{eq:MWI}  in the following two cases:

    \begin{equation}\label{eq: case 1}
    \{ S(\tilde{V}) , I_0 \} = \frac{i}{\hbar} S(\tilde{V}) \cdot_{T} \left [ \{\tilde{V} , I_0 \}_{T} +\frac{1}{2}\{\tilde{V},\tilde{V}\}_{T} - i \hbar \An(\tilde{V}) \right ] \ ,
\end{equation}
and
    \begin{multline}\label{eq: case 2}
        \{ S(\tilde{V}+F) , I_0 \} =   \frac{i}{\hbar} S(\tilde{V}+F) \cdot_{T} \Big [ \{\tilde{V}+F , I_0 \}_{T} \\
        +\frac{1}{2}\{\tilde{V}+F,\tilde{V}+F\}_{T} - i \hbar \An(\tilde{V}+F)    \Big ]     
    \end{multline}

Now we use the fact that $F$ does not depend on antifields, so its bracket with $I_0$ and with itself vanishes, and Eq. \eqref{eq: case 2} simplifies to
    \begin{multline}\label{eq: case 2b}
        \{ S(\tilde{V})\cdot_{T}S(F) , I_0 \} =   \frac{i}{\hbar} S(\tilde{V}+F) \cdot_{T} \Big [ \{\tilde{V}, I_0 \}_{T} \\
        +\frac{1}{2}\{\tilde{V},\tilde{V}\}_{T}+\{\tilde{V},F\}_{T} - i \hbar \An(\tilde{V}+F) \Big ]
    \end{multline}
Next, to obtain the r.h.s. of \eqref{eq:renormalized step}, we subtract $S(F)\cdot_T$\eqref{eq: case 1} from \eqref{eq: case 2b} and obtain
\begin{multline}
    \{ S(\tilde{V})\cdot_{T}S(F) , I_0 \} - S(F)\cdot_T\{ S(\tilde{V}) \cdot_{T}, I_0 \}\\
    = \frac{i}{\hbar} S(\tilde{V}+F) \cdot_{T} \Big [\{\tilde{V}, F \}_{T} - i \hbar (  \An(\tilde{V}+F)- \An(\tilde{V}) ) \Big ] \ . 
\end{multline}
 \end{proof}

\begin{theorem}[Effective master equation for the effective average action] \label{theorem:master}
The renormalised Quantum Master Equation \eqref{eq:rQME} implies a symmetry constraint on the effective average action $\Gamma_k(\varphi, \varphi^\ddagger; \phi, \sigma, \eta)$:
\begin{equation} \label{eq:master ren}
\delta_0 \tilde \Gamma_k + \int_x \left [ \frac{\delta \tilde \Gamma_k}{\delta \phi(x)} \frac{\delta \tilde \Gamma_k}{\delta \sigma(x)} + q_k(x) \frac{\delta \tilde \Gamma_k}{\delta \eta(x)} \right ] = i \hbar \mathcal A  \ ,
\end{equation}
where $\tilde \Gamma_k = \Gamma_k + Q_k(\phi)$ and $\mathcal A := \langle A_{\alpha_\Theta(V)} (J + Q_k) \rangle$.
\end{theorem}
\begin{proof}
The proof works by showing that 
\begin{equation} \label{eq:QME-master equation relation}
\frac{i}{\hbar} \langle \mathcal{GRQME} \rangle =  \delta_0 \tilde \Gamma_k + \int_x \left [ \frac{\delta \tilde \Gamma_k}{\delta \phi(x)} \frac{\delta \tilde \Gamma_k}{\delta \sigma(x)} + q_k(x) \frac{\delta \tilde \Gamma_k}{\delta \eta(x)} \right ] - i \hbar  \mathcal A \ ,
\end{equation}
where the rQME with generalised gauge-fixing is $\mathcal{GRQME} = \frac{1}{2} \{ \alpha_\Theta(V), \alpha_\Theta(V) \}_T - i \hbar A(\alpha_\Theta(V) )$.

By definition, we have
\[
0 = \frac{i}{\hbar}\langle \mathcal{GRQME} \rangle = \frac{i}{\hbar} e^{-\frac{i}{\hbar}W_k} S^{-1}(V)\star \left [ S(Q_k+H) \cdot_T S(\alpha_\Theta(V) ) \cdot_T \mathcal{GRQME} \right] \ .
\]
The master Ward identity \eqref{eq:MWI-general} states that $\frac{\hbar}{i} \{S(\alpha_\Theta(V)), I_0 \} = S(\alpha_\Theta(V)) \cdot_T \mathcal{GRQME}$: substituting it to the above relation we get
\[
0 = \frac{i}{\hbar} \langle \mathcal{GRQME} \rangle = e^{-\frac{i}{\hbar}W_k} S^{-1}(V)\star \left [ S(J + Q_k) \cdot_T \{ S(\alpha_\Theta(V)), I_0 \} \right] \ .
\]
Since $F= J+Q_k$ does not contain antifields, we can Lemma \ref{lemma:anomaly} with $\tilde V = \alpha_\Theta(V)$ to arrive at
\begin{multline}\label{eq:RQME-EME-intermediate}
\frac{i}{\hbar} \langle \mathcal{GRQME} \rangle \\
=e^{-\frac{i}{\hbar}W_k} S^{-1}(V)\star \bigg [ \{ S(\alpha_\Theta(V)+ J + Q_k) , I_0 \}  \\
- \frac{i}{\hbar} S(\alpha_\Theta(V)+J + Q_k) \cdot_T \left ( \{\alpha_\Theta(V), J + Q_k \}_T - i \hbar A_{\alpha_\Theta(V)}(J + Q_k) \right ) \bigg ] \ ,
\end{multline}
where we can recognise that $S^{-1}(V)\star \left [ \{ S(\alpha_\Theta(V) + J + Q_k) , I_0 \} \right ] = \delta_0 Z_k$. 

Now, we can individually evaluate the terms proportional to the bracket $\{\alpha_\Theta(V), J + Q_k \}_T$.

The bracket $\{\alpha_\Theta(V), J\}_T$ gives
\[
\{ J , \alpha_\Theta(V) \}_T = \int_x j(x) \frac{\delta}{\delta \varphi^\ddagger(x)} \alpha_\Theta(V) = \int_x j(x) \frac{\delta}{\delta \varphi^\ddagger(x)} V \left ( \varphi^\ddagger + \sigma + \eta \varphi \right ) \ ,
\]
and so
\begin{equation}\label{eq:j-bracket}
\{ J , \alpha_\Theta(V) \}_T = \int_x j(x) \frac{\delta}{\delta \sigma(x)} \alpha_\Theta(V) \ .
\end{equation}

The bracket $\{\alpha_\Theta(V), Q_k\}_T$ in turn is
\[
\{\alpha_\Theta(V) , Q_k \}_T  = - \int_{x} q_k(x) \varphi(x) \frac{\delta}{\delta \varphi^\ddagger(x)} V\left (\varphi^\ddagger + \sigma + \eta \varphi \right ) \ ,
\]
which therefore give
\begin{equation}\label{eq:q-bracket}
    \{\alpha_\Theta(V) , Q_k \}_T  = - \int_x q_k(x) \frac{\delta \alpha_\Theta(V)}{\delta \eta} \ .
\end{equation}

Therefore, substituting Eqs. \eqref{eq:j-bracket} and \eqref{eq:q-bracket} in Eq. \eqref{eq:RQME-EME-intermediate} we arrive at
\begin{equation}
\delta_0 Z_k - \int_x j(x) \frac{\delta Z_k}{\delta \sigma(x)} + q_k(x) \frac{\delta Z_k}{\delta \eta(x)} = \langle A_{\alpha_\Theta(V)} (J + Q_k) \rangle \ .
\end{equation}

From the above expression, the result follows by noticing that 
\[
j(x) = - \frac{\delta \tilde \Gamma_k}{\delta \phi} \ , \quad e^{-\frac{i}{\hbar}W_k} \frac{\delta Z_k}{\delta \sigma} = \frac{i}{\hbar} \frac{\delta W_k}{\delta \sigma} = \frac{i}{\hbar}\frac{\delta \tilde \Gamma_k}{\delta \sigma} 
\ ,
\]
and similarly $e^{-\frac{i}{\hbar}W_k} \frac{\delta Z_k}{\delta \eta} = \frac{\delta \tilde \Gamma_k}{\delta \eta}$, and  $e^{- \frac{i}{\hbar}W_k} \delta_0 Z_k = \frac{i}{\hbar} \delta_0 W_k = \frac{i}{\hbar} \delta_0 \tilde \Gamma_k$.
\end{proof}

\begin{remark}
Due to the modification \eqref{def:extended-Zk}, the scale dependence of the generating functional $Z_k$ is introduced by the extended BV differential $\mBV$, via $H+Q_k = \mBV \Xi$. Even though the field $\eta$ enters the generating functional only through the trivial part of the cohomology, the same does not hold for the regulator function $q_k$ and the scale $k$. This implies that, although physical observables do not depend on $\eta$, they still depend non-trivially on $k$.

A way of seeing this is expressing the flow equations for $\Gamma_k$ as
\[
\partial_k \Gamma_k = \langle \partial_k \mBV \Xi \rangle - \partial_k Q_k(\phi) \ .
\]
Therefore, the $k-$derivative of the effective average action is not a $\mBV-$exact term, precisely because the scale dependence comes through the extended BV differential. This ensures that physical vertex functions derived from $\Gamma_k$ depend non-trivially on $k$.
\end{remark}

\begin{remark} \label{remark:Z-J form}
The effective master equation \eqref{eq:master ren} is written to emphasise the contribution coming from the regulator term. However, we can introduce a source for the variation of $\eta$, by including a term $\Sigma_\eta = \int_x \sigma_\eta q_k$ in $Z_k$, where, by definition, $\frac{\delta \tilde \Gamma_k}{\delta \sigma_\eta} = q_k$. Doing so allows us to rewrite the effective master equation in Zinn-Justin form, with the addition of the Koszul-Tate differential, which in the absence of anomalies read
\begin{equation}
\delta_0 \tilde \Gamma_k + \int_x \frac{\delta \tilde \Gamma_k}{\delta \phi(x)} \frac{\delta \tilde \Gamma_k}{\delta \sigma(x)} = 0 \ .
\end{equation}

The source for the variation of $\eta$ introduces an additional $k-$dependence in the generating functional. There are two ways to deal with this additional term in the flow equation: i) one can simply evaluate the flow for $\sigma_\eta = 0$; or ii) since $\sigma_\eta q_k$ is a classical contribution, we can subtract it from the effective average action via a redefinition $\Gamma_k \to \Gamma_k - \int_x \sigma_\eta q_k $. This term removes the additional $k-$dependence in the Wetterich equation and we get the same flow equation \eqref{eq:wetterich}.

\end{remark}

\subsubsection{Anomalies}\label{sec:anomalies}

Equation \eqref{eq:master ren} tells us that, in the absence of anomalies, $\tilde \Gamma_k$ is invariant under the symmetry transformation
\begin{equation}
\delta_\theta \phi = \frac{\delta \tilde \Gamma_k}{\delta \sigma} \theta \ , \ \delta_\theta \eta = q_k \theta \ .
\end{equation}
In general this differs from $s\phi=\frac{\delta V(\phi)}{\delta \varphi^\ddagger}$. To see this, we note that $\frac{\delta \tilde \Gamma_k}{\delta \sigma}=\frac{\delta \Gamma_k}{\delta \sigma}$, since $ \tilde \Gamma_k$ and $ \Gamma_k$ differ by a term independent of $\sigma$. Next we compute
$$\frac{\delta \Gamma_k}{\delta \sigma} = \left\langle \frac{\delta V}{\delta \varphi^\ddagger} \right\rangle\,,$$
and observe that in general
\begin{equation}
\left\langle \frac{\delta V}{\delta \varphi^\ddagger} \right\rangle \neq \frac{\delta V(\phi)}{\delta \varphi^\ddagger} \,,
\end{equation}
but these coincide if $V$ is linear both in fields and antifields. This agrees with the known result saying that if an action is invariant under linear symmetries, then the (average) effective action is also invariant under these same symmetries. In fact, the linear symmetry $s \eta = q_k \ , s q_k = 0$ is inherited by $\tilde \Gamma_k$.

In the presence of anomalies, the classical symmetry $s_k (I + J + \Sigma + Q_k + H) =0$ does not hold in the renormalised theory (that is, for the effective average action).

The anomaly term $\mathcal A(V)$, thus, measures the breaking of the extended BV invariance at the quantum level. This anomalous contributions can arise due to the Epstein-Glaser renormalization of the time-ordered products, which may not be compatible with the extended symmetry of the classical action. This is analogous to the concept of anomaly discussed in \cite{Brunetti2022}.

The anomaly term in the effective master equation arises from $A_{\alpha_\Theta(V)}(J + Q_k)$, which in the case of Yang-Mills-type theories becomes $A_{\alpha_\Theta(V)}(J + Q_k) = A(V + J + \Sigma + Q_k + H) - A(V+ \Sigma + H)$. Now, the argument of the first contribution to the anomaly is $s_k-$closed by Eq. \eqref{eq:classical extended BV invariance}. Moreover, the regulator and sources terms are $s_k-$exact, since they are given in Eq. \eqref{eq:source-regulator-exact}, and so they belong to the trivial cohomology of $s_k$. Similarly, the second contribution $A(V + \Sigma + H) = A(\alpha_\Theta(V))$ depends on the interaction $V$ and a generalised gauge-fixing sector $\Sigma + H$. Standard arguments show that the gauge-fixing sectors are trivial in cohomology \cite{Barnich2000}. The anomaly is therefore constrained by the minimal sector of the cohomology of $s_k$.

Since the pairs $(j, \sigma)$ and $(q_k,\eta)$ are contractible, it follows that the cohomology of $s_k$ coincides with the cohomology of the BV differential $s$. Therefore, anomalies are controlled by standard BV cohomology.

In some situations, one can remove the anomaly by inductive redefinition of time-ordered products \cite{Brennecke2008}, if certain cohomological conditions are met. These conditions are fulfilled for Yang-Mills in the absence of chiral fermions and in 4D pure gravity, see e.g. \cite{Barnich2000, Fredenhagen2013, Hollands2008,Rejzner2015}.

\section{Solution of the effective master equation}
Since the effective master equation \eqref{eq:master ren} has the same algebraic structure of the Zinn-Justin equation, one can make use of standard perturbative methods to constrain the form of the renormalized action and prove perturbative renormalizability by the same methods known in the literature, see e.g. \cite{Zinn-Justin2002}. In this section, we are interested in how the symmetry constrains the structural form of $\tilde \Gamma_k$, i.e. its functional dependence on the fields and the sources. The general form of the effective average action compatible with the symmetry will then be used as the input to solve the flow equation, i.e. to find the trajectories of the coupling constants in the parameter space under rescalings of $k$.

Following the discussion in Ref. \cite{Barnich2000}, here we propose a non-perturbative method to solve the effective master equation \eqref{theorem:master}. As in the case of flow equations, even though the derivation was done for off-shell functionals, i.e., at the algebraic level, in order to find explicit solutions of the effective master equation we now consider $\Gamma_k$ evaluated on a faithful state $\omega$, i.e., on a particular gauge-fixed field configuration $(\varphi = 0, \varphi^\ddagger = 0)$. This simplifies the discussion, since now $\omega(\delta_0 \Gamma_k) = 0$, and $\omega(\Gamma_k)$ is a functional of external fields $(\phi, \sigma, \eta)$. Moreover, we consider only theories linear in the antifields, as Yang-Mills and gravity.

For simplicity of notation, from now on we will denote the state evaluated effective average action by the same symbol.

The starting point is the observation that, since $\Gamma_k$ is defined as a quantum M\o ller operator, it is a formal power series in $\hbar$ \cite{Duetsch2001,Hawkins2020}. In the classical limit $\hbar \to 0$, from \eqref{eq:classical limit Gamma} it follows that $\tilde \Gamma_k$ reduces to the classical action, with additional terms coming from $Q_k$ and $H$:
\begin{equation} \label{def:extended action}
\tilde \Gamma_k \xrightarrow{\hbar \to 0} I_{ext} = I_0(\phi) + V(\phi, \sigma) + H(\phi, \eta) + Q_k(\phi) \ .
\end{equation}
The dependence of $V$ on the sources $\sigma$ comes from the $\Sigma$ term, which is just a copy of the antifield dependence of $V$.

The above observation suggests to consider the decomposition
\begin{equation} \label{def decomposition}
\tilde \Gamma_k = I_{ext} + \hbar \hat \Gamma_k  \ .
\end{equation}
Notice that we are not assuming an approximation at first loop order, but simply exploiting the fact the zeroth order of $\tilde \Gamma_k$ corresponds to the classical action to separate the $\hbar-$dependent contribution $\hat \Gamma_k$.

Substituting the decomposition \eqref{def decomposition} in \eqref{eq:master ren}, we get
\begin{equation} \label{eq:symmetry constraint decomposition}
\Slav I_{ext} + \hbar \Slav \hat \Gamma_k + \hbar^2 \frac{1}{2}(\hat \Gamma_k , \hat \Gamma_k) = i \hbar \mathcal A(V) \ ,
\end{equation}
where we introduced the effective bracket
\begin{equation}
(A,B) = \int_x \frac{\delta A}{\delta \phi(x)} \frac{\delta B}{\delta \sigma(x)} + (-1)^{\abs{A}} \frac{\delta A}{\delta \sigma(x)} \frac{\delta B}{\delta \phi(x)} \ ,
\end{equation}
defined by declaring the sources to be conjugate to the respective effective fields, i.e., $(\phi^A(x), \sigma^B(y)) = \delta^{AB} \delta(x-y)$, where the indices $A, \ B$ run on the field type, and the operator
\begin{equation}
\Slav A = \int_x \frac{\delta I_{ext}}{\delta \phi(x)} \frac{\delta A}{\delta \sigma} + (-1)^{\abs{A}} \frac{\delta I_{ext}}{\delta \sigma(x)} \frac{\delta A}{\delta \phi(x)} + q_k(x) \frac{\delta A}{\delta \eta(x)} \ .
\end{equation}

Since $\tilde \Gamma_k$ is a formal power series in $\hbar$, to solve the effective master equation, each term in the above decomposition must vanish independently.

The above decomposition suggests to extend the classical BV algebra of the functionals of the original fields and antifields $(\varphi, \varphi^\ddagger)$ to the space of functionals of $(\varphi, \varphi^\ddagger, \phi, \sigma)$ in a natural way, that is, considering the effective fields and BRST sources $(\phi, \sigma)$ as the effective part of the algebra. In this way we complete the transition from the original field configurations $(\varphi, \varphi^\ddagger)$ to the space of effective fields and sources $(\phi, \sigma)$, where a natural notion of BV algebra is inherited from the original structure.

The three conditions coming from \eqref{eq:symmetry constraint decomposition} can now be interpreted as cohomological constraints:
\[
\Slav I_{ext}=0\,,\quad  \Slav \hat \Gamma_k= i \hbar \mathcal A(V) \,,\quad  \frac{1}{2}(\hat \Gamma_k , \hat \Gamma_k)=0\,.
\]
We will now discuss them separately.
\subsection{Effective BV invariance of $I_{ext} $}
We start with
\begin{equation} \label{eq:BRST invariance of I ext}
\Slav I_{ext} = 0 \ .
\end{equation}
This equation is identically satisfied by $I_{ext}$ given by \eqref{def:extended action}, and it encodes the effective BV invariance of $I_{ext}$, where the effective BV transformations are
\begin{equation}
\Slav \phi = \frac{\delta V}{\delta \sigma} \ , \ \Slav \eta = q_k\ , \ \Slav q_k=0\ ,
\end{equation}
Note that the last two equations imply that $q_k$ and $\eta$ form a contractible pair.
This demonstrates that $I_{ext}$ plays the role of the proper solution to the Classical Master Equation in the space of functionals of $(\phi, \sigma,q_k,\eta)$.

\subsection{Cohomology condition} 
\label{subsec:cohomology condition}
Next we consider the linear order, for the moment in the absence of anomalies:
\begin{equation}  \label{eq:cohomological condition}
\Slav \hat \Gamma_k = 0 \ .
\end{equation}
This gives a non-trivial condition on the effective average action. 
As usual, $\Slav$-exact solutions can be re-absorbed by redefinitions of the fields \cite{Anselmi1994, Barnich2000}, and so the non trivial part of $\hat \Gamma_k$ must be in the cohomology of $\Slav$. 

The operator can be decomposed into
\begin{equation}
\Slav = \mBRST + \int_x q_k(x) \frac{\delta}{\delta \eta(x)} \ .
\end{equation} 
Note that the second term acts only on the non-minimal sector and guarantees that $q_k$ and $\eta$ form a contractible pair \cite{Barnich2000}. Hence the non-trivial information about the cohomology of the effective BV operator is already encoded in $\mBRST$, which acts as:
\[
\mBRST \phi = \Slav \phi\,,\quad \mBRST \eta = 0\,.
\]
This is analogous to the action of the BV operator $s$,
but acting on $(\phi, \sigma)$ instead of $(\varphi, \varphi^\ddagger)$. Hence, in particular, the effective antighost and the Nakanishi-Lautrup fields $( \phi_{\bar c} \ , \phi_b)$ form a contractible pair, since $\bar{c}$ and $b$ form a contractible pair.

The quantum contribution $\hat \Gamma_k$ is then in the cohomology of the effective BV operator $\mBRST$.

The cohomology of $\mBRST$ is then determined by its minimal, regulator independent sector.  By standard arguments, the cohomology of $\mBRST$ on local functionals (that is, integrals of local top forms) is characterised by the cohomology group of $\mBRST$ modulo the exterior differential, $H^{g,n}(\mBRST | d)$, where $g$ is the effective-ghost number, the degree associated to the effective field corresponding to ghost fields, and $n$ is the form degree. Here we are working on the space of local $n$-forms, rather than local functionals, as is standard in the literature \cite{Barnich2000}.

Since the action of $\mBRST$ in the space $(\phi, \sigma)$ is identical to the action of $s$ in the space $(\varphi, \varphi^\ddagger)$, the cohomology group $H^{g,n}(\mBRST | d)$ can be characterised by the standard treatment of local BV cohomology \cite{Barnich2000}.

Since $I_{ext}$ has effective-ghost number $0$, in absence of anomalies $\hat \Gamma_k$ is determined by the BRST cohomology in ghost number $0$, $H^{0,n}(\mBRST)$ in the space of functionals of $(\phi, \sigma)$.

In principle, the computation of the BV cohomology provides the most general solution to the effective master equation.
The effective average action is, in general, non-local, so one needs to study the cohomology of the BRST operator, not restricted to local functionals as in \cite{Barnich2000}. However, one of the most used non-perturbative truncations for the effective average action is the \textit{derivative expansion},
\begin{equation} \label{eq:DE}
\Gamma_k =  \int_x f(\phi, \partial_{\mu} \phi,..., \partial_{(\mu_{i}}... \partial_{\mu_{n})} \phi)
\end{equation}
up to some finite order $n$ in the order of the derivative. At each order in the truncation, the effective average action is local, and one can apply the theorems on local BV cohomology; only including derivative of infinite order one is able to explicitly consider non-local effects. The assumption within this truncation is that non-local effects can be parameterised by $k-$dependent coefficients in front of the derivative expansion. Restricting the effective average action to the functional form \eqref{eq:DE}, implies that we can solve the effective master equation by standard local BV cohomology techniques.

In the space of local functionals, the local BRST cohomology $H^{0,n}(\mBRST | d)$ is completely determined by powerful theorems \cite{Barnich1994, Barnich1994-2, Barnich2000, Barnich1994-3}.

\begin{remark} \label{remark: cohomological treatment of anomalies}
In the presence of an anomaly term $\mathcal A(V)$, the identity \eqref{eq:BRST invariance of I ext} guarantees that the anomaly is at least of order $\hbar$:
\[
i   \hbar \mathcal A(V) = \sum_{n \geq 1} \hbar^n a_n(V) \ ,
\]
and the cohomology condition \eqref{eq:cohomological condition} gets modified into
\[
\mathscr S \hat \Gamma_k = a_1 \ .
\]
General theorems guarantee that the lowest order anomaly term in $\hbar$ is local \cite{Barnich2000}. The above condition provides the first-order condition on $ \mathcal A(V)$, since it implies
\[
\Slav a_1 = 0 \ .
\]
Trivial anomalies in the form $a_1 = \Slav \Theta$ can be reabsorbed in local counter-terms in the action. Non-trivial anomalies therefore are constrained by the cohomology of top forms in ghost number 1, $H^{1,d}(s|d)$. This relevant cohomology class has been studied extensively in the literature; e.g., in the case of 4D pure gravity or Yang-Mills theories, it is known \cite{Barnich1994-WZ} that $H^1(s|d)$ is trivial. At least for Yang-Mills-type theories, then, the anomaly can be removed. This is of course consistent with the discussion in section \ref{sec:anomalies}.
\end{remark}

\subsection{Consistency condition} \label{subsec:consistency}
\begin{equation} \label{eq:consistency condition}
(\hat \Gamma_k , \hat \Gamma_k) = 0 \ .
\end{equation}
This condition tells us that, treating the regulator term as a generalised gauge-fixing, the correction $\hat \Gamma_k$ itself satisfies the same Zinn-Justin equation as the effective action without regulator.

This condition can be discussed once the cohomological constraint \eqref{eq:cohomological condition} is solved, as an identity or as an additional, non-trivial constraint. For example, in effective-ghost number $0$, general theorems for Yang-Mills-type theories \cite{Barnich2000} guarantee that all representatives of $H^{0,d}(\gamma | d)$, where $d$ the space-time dimension, can be chosen to be strictly gauge-invariant, except for Chern-Simons forms for $d$ odd, and in particular the BRST sources $\sigma$ can be removed in all counter-terms. In this case, the condition \eqref{eq:consistency condition} reduces to a trivial identity, since $\frac{\delta \hat \Gamma_k}{\delta \sigma} = 0$. Similar considerations apply also for $H^{1,d}(\gamma | d)$, which controls gauge anomalies and where \eqref{eq:consistency condition} can provide a non-trivial constraint on the anomaly term.

It is clear that, once the structural form of $\tilde \Gamma_k$ is determined through the symmetry constraint, $\Gamma_k$ is determined by a simple translation.

\section{Consistency of the symmetries with the RG flow and flow of composite operators} \label{sec:consistency}
Theorem~\ref{theorem:master} gives us powerful cohomological methods to constrain the structural form of the effective average action. However, equation \eqref{eq:master ren} and its most important consequence, \eqref{eq:cohomological condition}, are useful only if they are compatible with the RG flow equation \eqref{eq:wetterich}, i.e., if solving the equation at fixed scale $k_0$ implies that the condition is satisfied at all scales.

To prove that the effective master equation is preserved along the RG flow, we consider the slightly more general problem, of the flow for a composite, local operator $\mathcal O_k(x; \varphi, \partial \varphi,...)$, that is, an operator with arbitrary dependence on the field $\varphi$ and its derivatives, which might also depend on the cut-off parameter $k$. A strategy \cite{Pagani2016} to compute its flow equation is to further extend the definition of the generating functional $Z_k$, to the generating functional of classical composite operators coupled with an external source $\upsilon$:
\begin{equation}
Z_k(\varphi, \varphi^\ddagger ; j, \sigma, \eta, \upsilon) = S(V)^{-1} \star \left [ S \left (V+ Q_k + J + \Sigma + H + \Upsilon_k \right ) \right ] \ ,
\end{equation}
where $\Upsilon_k = \int_x \mathcal O_k(x) \upsilon(x)$. The definition for the $\upsilon-$dependent $W_k(\upsilon, j)$ carries on as in section \ref{sec:generating functionals}.
To keep the notation reasonably compact, from now on we will suppress the dependence on the fields and antifields and on the sources $(\sigma, \eta)$. Note that $Z_k(j, \sigma, \eta) = Z_k(j, \sigma, \eta, \upsilon = 0)$. We will then show that
\begin{proposition}
Given a composite operator $\mathcal O_k(x; \varphi, \partial \varphi,...)$, its flow equation is given by
\begin{equation} \label{eq:flow-composite-operators}
\partial_k \langle \mathcal O_k \rangle = - \frac{i\hbar}{2} \int_x 	\partial_k q_k(x) G_k(x,x) \frac{\delta^2 \langle \mathcal O_k \rangle}{\delta \phi(x) \delta \phi(x)} G_k(x,x) + \langle \partial_k \mathcal O_k \rangle \ .
\end{equation}
\end{proposition}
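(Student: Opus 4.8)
\textit{Proof strategy.}
The plan is to repeat the derivation of the Wetterich equation of Section~\ref{sec:flow-eqs}, now for the $\upsilon$-deformed generating functionals, and then to project onto the composite-operator sector by one $\upsilon$-differentiation at $\upsilon=0$. First I would note that, since $\upsilon$ enters $Z_k(\ldots,\upsilon)$ only linearly through $\Upsilon_k=\int_x\mathcal O_k(x)\upsilon(x)$ and is \emph{not} Legendre-transformed, the effective composite operator is simply $\langle\mathcal O_k(z)\rangle_k=\frac{\delta\Gamma_k}{\delta\upsilon(z)}\big|_{\upsilon=0}$, a functional of $\phi$. The only $k$-dependent pieces of $Z_k(\ldots,\upsilon)$ are $Q_k$ and $\Upsilon_k$ (the terms $\Sigma$ and $H$ carry no explicit $k$-dependence), so, exactly as in Section~\ref{sec:flow-eqs} but with the extra source, $\partial_k W_k=\langle\partial_k Q_k\rangle+\langle\partial_k\Upsilon_k\rangle$, with $\langle\cdot\rangle$ the mean value \eqref{def:mean value} extended to include $H,\Upsilon_k$ in $S$ and $\langle\partial_k\Upsilon_k\rangle=\int_x\upsilon(x)\langle\partial_k\mathcal O_k(x)\rangle$. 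Subtracting $\partial_k Q_k(\phi)$ and using that the Legendre-transform variations cancel at fixed $\phi$, together with \eqref{second derivative W_k}, \eqref{eq: relation Gamma and its inverse G} and the coincidence-limit prescription of Section~\ref{sec:flow-eqs}, yields the $\upsilon$-deformed Wetterich equation
\begin{equation} \label{eq:upsilon-wetterich}
\partial_k\Gamma_k[\phi;\upsilon]=\frac{i\hbar}{2}\int_x\partial_k q_k(x)\,G_k(x,x;\upsilon)+\int_x\upsilon(x)\,\langle\partial_k\mathcal O_k(x)\rangle\ ,
\end{equation}
where $G_k(\cdot;\upsilon)$ is the $\upsilon$-dependent (Hadamard-subtracted) Green's function of $\frac{\delta^2}{\delta\phi^2}\big(\Gamma_k[\phi;\upsilon]+Q_k\big)$, that is, minus the inverse of $\tilde\Gamma_k^{(2)}[\phi;\upsilon]$.

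Next I would differentiate \eqref{eq:upsilon-wetterich} with respect to $\upsilon(z)$ and set $\upsilon=0$. On the left one gets $\partial_k\langle\mathcal O_k(z)\rangle_k$ (at fixed $\phi$); the explicit source term on the right gives $\langle\partial_k\mathcal O_k(z)\rangle$; and the loop term requires $\frac{\delta}{\delta\upsilon(z)}G_k(x,x;\upsilon)$, which I would obtain from the resolvent identity: since $G_k=-\big(\tilde\Gamma_k^{(2)}\big)^{-1}$ as an integral operator, $\frac{\delta G_k}{\delta\upsilon(z)}=G_k\,\frac{\delta\tilde\Gamma_k^{(2)}}{\delta\upsilon(z)}\,G_k$, and $\frac{\delta\tilde\Gamma_k^{(2)}}{\delta\upsilon(z)}\big|_0=\frac{\delta^2}{\delta\phi\delta\phi}\langle\mathcal O_k(z)\rangle_k$ because $Q_k(\phi)$ is $\upsilon$-independent. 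Inserting this, and using that $\partial_k q_k$ is a local multiplier so that the external argument is forced to coincide, collapses the first term to the advertised structure of two propagators sandwiching the second effective-field derivative of the effective composite operator; this is \eqref{eq:flow-composite-operators} once the overall sign is tracked through the Legendre transform and the definition of $G_k$.

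\textit{Main obstacle.}
The substantive technical point, just as for the Wetterich equation itself, is the UV renormalization: the coincidence limit hidden in $G_k(x,x)$ — and now also in the doubled object $[G_k\,\delta^2\langle\mathcal O_k\rangle_k/\delta\phi^2\,G_k](x,x)$ — is singular and is defined only through the Hadamard subtraction $:\,\cdot\,:_{\tilde H_F}$ of Section~\ref{sec:flow-eqs}. One has to check that $\delta/\delta\upsilon$ commutes with this subtraction and with the perturbative inversion $\phi\leftrightarrow j$, i.e. that renormalizing and then differentiating agrees with differentiating and then renormalizing. This is harmless when $\mathcal O_k$ contains at most one derivative of $\chi$, since then the principal symbol of the quantum wave operator — hence the Hadamard coefficients $u,v$ — is unchanged to first order in $\upsilon$; for higher-derivative $\mathcal O_k$ the counterterm $\tilde H_F$ itself acquires $\upsilon$-dependence and the subtraction has to be expanded accordingly. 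A secondary subtlety is that for a genuinely composite \emph{local} operator the insertion $\Upsilon_k$ must be built from the renormalized time-ordered product, which may generate an additional local term on the right-hand side (operator mixing, analogous in nature to the anomaly $A(V)$ of \eqref{def:anomaly}); for \emph{regular} $\mathcal O_k$ the derivation above is exact as stated.
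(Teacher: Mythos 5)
Your proposal follows essentially the same route as the paper: couple $\mathcal O_k$ to an external source $\upsilon$ in $Z_k$, derive the $\upsilon$-deformed Wetterich equation, and differentiate once in $\upsilon$ at $\upsilon=0$, using the resolvent identity for $\frac{\delta G_k}{\delta\upsilon}$ to produce the $G_k\,\langle\mathcal O_k\rangle^{(2)}\,G_k$ structure. The additional remarks on the $\upsilon$-dependence of the Hadamard subtraction and on operator mixing for local composite insertions go beyond what the paper spells out, but the core argument is the same.
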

\begin{proof}

The derivation of the RG flow equation for $Z_k(j, \upsilon)$ follows the same steps as in section \ref{sec:flow-eqs}, with an additional term coming from the $k-$dependent $\mathcal O_k$:
\begin{equation}
\partial_k Z_k(j, \upsilon) = \frac{i}{\hbar} S(V)^{-1} \star \left [ S \left (V+Q_k+J + \Sigma + H + \int_x \mathcal O_k \upsilon \right ) \cdot_T \left ( \partial_k Q_k + \partial_k \Upsilon_k \right ) \right ] \ ,
\end{equation}
and analogously
\begin{equation}
\partial_k W_k(j, \upsilon) = \langle \partial_k Q_k \rangle_{\upsilon} + \langle \partial_k \Upsilon_k \rangle_\upsilon \ .
\end{equation}

The flow equation for the $\upsilon-$dependent effective action gets modified accordingly:
\begin{equation} \label{eq:flow-eq-epsilon}
\partial_k \Gamma_k = \frac{i\hbar}{2} \int_x \partial_k q_k G_k(\phi, \upsilon) + \langle \partial_k \Upsilon_k \rangle_ \upsilon \ .
\end{equation}
To get the flow equation for the operator $\mathcal O_k$, we notice that
\begin{equation}
\frac{\delta W_k}{\delta \upsilon} = \langle \mathcal O_k \rangle_{\upsilon} \ .
\end{equation}
Since the derivatives of $W_k$ and $\Gamma_k$ with respect to $\upsilon$ coincide, we have
\begin{equation}
\frac{\delta \Gamma_k}{\delta \upsilon} = \langle \mathcal O_k \rangle_{\upsilon}
\end{equation}
where the mean value operator is taken for $j = j_\phi$.

Therefore, taking the derivative with respect to $\upsilon$ of the RG flow equation \eqref{eq:flow-eq-epsilon} gives the flow for the expectation value of the composite operator $\mathcal O_k$:
\begin{multline}
\partial_k \frac{\delta \Gamma_k(\upsilon)}{\delta \upsilon} = - \frac{i\hbar}{2} \int_x 	\partial_k q_k(x) G_k(x,x) \frac{\delta \Gamma_k^{(2)}}{\delta \upsilon} G_k(x,x) \\
+ \frac{i}{\hbar} \left ( \langle \mathcal O_k \cdot_T \partial_k \Upsilon \rangle_{\upsilon} - \langle \mathcal O_k \rangle_\upsilon \langle \partial_k \Upsilon \rangle_\upsilon \right ) + \langle \partial_k \mathcal O_k \rangle_{\upsilon} \ .
\end{multline}
Evaluating for vanishing source $\upsilon = 0$ we arrive at the proposition.
\end{proof}

We now return to the problem of compatibility of the effective master equation with the flow. We can apply the flow equation for composite operators \eqref{eq:flow-composite-operators} to $\langle \mathcal{RQME} \rangle$, which is equivalent to the operator in the l.h.s of the effective master equation thanks to \eqref{eq:QME-master equation relation}.
The last term in the flow of composite operators \eqref{eq:flow-composite-operators} vanishes, since $\mathcal{RQME}$ does not exhibit explicit $k-$dependence. Then, we have
\begin{equation}
\partial_k \langle \mathcal{RQME} \rangle = - \frac{i \hbar}{2}  \int_x 	\partial_k q_k(x) G_k(x,x) \frac{\delta^2 \langle \mathcal{RQME} \rangle }{\delta \phi(x) \delta \phi(x)} G_k(x,x) \ .
\end{equation}
Since the flow of the operator is proportional to itself, if the effective master equation is satisfied at some scale $k$ $\langle \mathcal{RQME} \rangle = 0$ it is automatically satisfied at all scales.

\section{Conclusions and Outlook}

In this paper we introduced a Wetterich-type flow equation for gauge theories quantised using the BV formalism. The language of pAQFT allows to treat in a unified way theories in generic, globally hyperbolic spacetimes, without referring to a particular state. With respect to our previous paper \cite{DDPR2022}, we expanded the treatment of fRG in pAQFT to gauge theories, discussing the formalism at the algebraic level only.

The extension of the field configuration space to include the regulator function $q_k$ and a new auxiliary field $\eta$ allows to derive extended Slavnov-Taylor identities for the effective average action, in which the regulator sector appears as an extension of the gauge-fixing term. This in turn allows to treat the regulator sector as a non-minimal contribution in cohomology, and to constrain the effective average action by the powerful techniques of BV and BRST cohomology \cite{Barnich2000}.

In this construction, the locality of the regulator function $q_k$ is crucial, in order to interpret it as an additional field. 


The solution of the cohomological problem at some fixed RG scale $k$ can then be used as an input for the flow equation.
We have also shown the consistency of the extended Slavnov-Taylor identities with the RG flow: once they are satisfied at a certain scale $k = \Lambda$, they are satisfied at all scales.



In future works, it will be of interest to investigate a genuinely non-perturbative formulation of the effective average action, derived in the context of dynamical $C^*-$algebras, and the connections with the non-perturbative St\"uckelberg-Petermann group \cite{Brunetti2021UMWI, Brunetti2021, Buchholz2019}. Along the same lines, it would be interesting to relate the effective action arising from the Epstein-Glaser renormalization of time-ordered products, and the effective action as a solution of the RG flow equation.




The general formalism we developed here can be now be applied to concrete examples, to study the non-perturbative renormalizability of gauge theories in curved, Lorentzian spacetimes. 

The main difficulty in concrete applications is the construction of the interacting propagator $G_k$, just as in the Euclidean case. 

In the Lorentzian case the construction depends on the choice of a state \cite{DDPR2022}. A state for the free theory is, thus, the last missing ingredient to apply our formalism to concrete models and derive the $\beta-$functions \cite{DDPR2022}. In the formalism of the present paper, the choice of a (quasifree) state for the free (linearised) theory corresponds to the choice of a 2-point function $\Delta_+$, together with the evaluation of the functional on the vanishing field configuration $\varphi = 0$. 

Once a state is chosen, it is possible to investigate the RG flow of some gauge theory on curved spacetime.
The study of a free state for Yang-Mills theory in curved spacetimes is a delicate issue, beyond the scope of this paper. Based on recent constructions of states for linearised Yang-Mills theories and gravity in curved spacetimes (see e.g. \cite{ Gerard2023, Gerard2014,Gerard2024,  Murro2024, Wrochna2014}), in future works we will investigate the fRG flow of Yang-Mills theories in curved spacetimes. 

The viability of this formalism in concrete applications has been tested to investigate the existence of a non-trivial fixed point and the asymptotic safety scenario in the RG flow of Lorentzian quantum gravity \cite{DAngelo2023}.


\section*{Data statement}
Data sharing is not applicable to this article as no new data were created or analysed in this study.

\section*{Acknowledgements}
We are grateful to Nicolò Drago, Markus Fr\"ob, Nicola Pinamonti, and Berend Visser for useful discussions on the paper and for a thorough reading of the manuscript and to Antonio D. Pereira for helpful remarks. We are also grateful for the kind hospitality of Perimeter Institute during the completion of the manuscript.
K.R. would like to thank Astrid Eichhorn, Benjamin Knorr, Alessia Platania, Frank Saueressig and Marc Schiffer for very helpful and inspiring discussions.
E.D. is supported by a PhD scholarship of the University of Genoa, by the project GNFM-INdAM Progetto Giovani \textit{Non-linear sigma models and the Lorentzian Wetterich equation}, CUP\textunderscore E53C22001930001, and is the recipient of a INdAM scholarship to conduct research abroad.
\printbibliography
\end{document}